\documentclass[10pt,reqno,notitlepage]{article}
\bibliographystyle{plain}
\bibstyle{plain}

\usepackage{hyperref}
\usepackage{amssymb}
\usepackage{amsthm}
\usepackage{graphicx}
\usepackage{amsmath}
\usepackage{datetime}
\usepackage{subfigure}
\usepackage{bbm}

\usepackage{cancel}
\usepackage[usenames,dvipsnames]{color}
\usepackage[top=1in, bottom=.9in, left=1in, right=1in]{geometry}
\usepackage{enumerate}
\usepackage{mathtools}									
\mathtoolsset{showonlyrefs=true}							
\usepackage{xcolor}


\newtheorem{theorem}{Theorem}
\newtheorem{lemma}[theorem]{Lemma}								
	
\newtheorem{corollary}[theorem]{Corollary}	
	
\theoremstyle{remark}
\newtheorem{remark}{Remark}

%
%



\renewcommand{\>}{\right\rangle}
\renewcommand{\(}{\left(}				
\renewcommand{\)}{\right)}
\renewcommand{\[}{\left[}
\renewcommand{\]}{\right]}


\def\E{\mathbb{E}}			
\def\P{\mathbb{P}}										
\def\R{\mathbb{R}}
\def\K{\mathbb{K}}

\def\z{\mathbf{z}}


\def\L{\mathcal{L}}

\def\O{\mathcal{O}}
\def\U{\mathcal{U}}

\def\F{\mathcal{F}}




\def\eps{\varepsilon}

\def\sig{\sigma}

\def\lam{\lambda}





\def\Sumi{\sum_{i=1}^2}
\def\Sumk{\sum_{k=1}^2}
\def\d{\partial}

\def\ind{\mathbb{I}}

\def \tr{\rho^{(1)}}
\def \trB{\rho^{(1,B)}}





\def \abs#1{\left| #1 \right| }
\def \sign#1{\operatorname{sign}\left({#1} \right)}
\newcommand{\e}[1]{\operatorname{e}^{#1}}

\newcommand{\Exp}[1]{\operatorname{exp} \left\{ \, #1 \,\right\}}

\usepackage[normalem]{ulem}

\newcommand{\bp}{\begin{pmatrix} } 
\newcommand{\ep}{\end{pmatrix}}

\begin{document}

\title{Optimal Investment with Correlated  Stochastic Volatility Factors \footnote{Data sharing is not applicable to this article as no new data were created or analyzed in this study.}}

\author{
Maxim Bichuch \thanks{Department of Mathematics, SUNY at Buffalo, 244 Mathematics Building
Buffalo, NY 14260, USA, email: {\tt mbichuch@buffalo.edu}. Work  is partially supported by NSF grant DMS-1736414. Research is partially supported by the Acheson J. Duncan Fund for the Advancement of Research in Statistics.}
\and Jean-Pierre Fouque
\thanks{
Department of Statistics and Applied Probability,
South Hall 5504,
University of California
Santa Barbara, CA 93106
{\tt fouque@pstat.ucsb.edu}. Work  supported by NSF grant DMS-1814091.
}
}
\date{\today}
\maketitle

\begin{abstract}
The problem of portfolio allocation in the context of stocks evolving in random environments, that is with volatility and returns depending on random factors, has attracted a lot of attention. The problem of maximizing a power utility at a terminal time with only one random factor can be linearized thanks to a classical distortion transformation. In the present paper, we address the situation with several factors using a perturbation technique around the case where these factors are perfectly correlated reducing the problem to the case with a  single factor. Our proposed approximation requires to solve numerically two linear equations in lower dimension instead of a fully non-linear HJB equation. A rigorous accuracy result is derived by constructing sub- and super- solutions so that their difference is at the desired order of accuracy. We illustrate our result with a particular model for which we have explicit formulas for the approximation. In order to keep the notations as explicit as possible, we treat the case with one stock and two factors and we describe an extension to the case with two stocks and two factors.
\end{abstract}

{\bf AMS subject classification} 91G80, 60H30.\\

{\bf JEL subject classification} G11.\\

{\bf Keywords} Optimal investment, asymptotic analysis, utility maximization, stochastic volatility.\\

\section{Introduction}

The portfolio optimization problem was first introduced and studied in the continuous-time framework in \cite{Me:69, Me:71}, which provided explicit solutions on how to trade stocks and/or how to consume so as to maximize one's utility, with risky assets following the Black-Scholes-Merton model (that is, geometric Brownian motions with constant returns and constant volatilities), and when the utility function is of specific types (for instance, {Constant Relative Risk Aversion} (CRRA)). 

Stochastic volatility models have been widely studied over the last thirty years in the context of option pricing and the presence of several factors driving volatility has been well documented (see for instance \cite{fouque2000derivatives}, \cite{FoPaSiSo:11} and references therein). In  general settings, the models are intractable and often asymptotic solutions are sought, see e.g.  \cite{sircar1999stochastic}, \cite{fouque2000mean}, \cite{fouque2003multiscale}, \cite{feng2010short}.

In a general setting, \cite{kramkov2003necessary} showed existence and uniqueness of an optimal strategy using the duality approach. 
As an alternative approach, in a Markovian setting, the portfolio optimization problem with factors driving returns and volatility can be solved directly by describing it as a solution to an HJB equation with terminal condition given by the utility function.
Example of the latter approach in a portfolio optimization problem with multiscale factor models for risky assets include \cite{FoSiZa:13}, where return and volatility are driven by fast and slow factors.
Specifically, the authors heuristically derived the asymptotic approximation to the value function and the optimal strategy
for general utility functions. This analysis is complemented in \cite{FoHu:16} and in \cite{FoHu:17} in a non-Markovian context. The multiscale feature is essential to be able to consider multiple factors, because each factor requires a unique time scale.
The analysis simplifies considerably in the case of a single factor and power utilities thanks to a {\it distortion transformation} which linearizes the problem (see \cite{Za:99}, \cite{FoSiZa:13}, \cite{FoHu:17}). 

Our aim in this paper is to solve a problem with multiple factors of the same time scale. We do so by considering the case with multi factors and power utility as a perturbation problem around the case where the factors are perfectly correlated which reduces the problem to solving  linear problems. Additionally, we find a ``nearly-optimal" strategy, among all admissible strategies, without limiting them to strategies that asymptotically a-priori converge to the zeroth order strategy. The ``nearly-optimal"  strategy, if followed, produces an expected utility of the terminal wealth matching the value function at both zeroth and first order asymptotic expansion. 

The main idea of this paper is to first calculate a heuristic asymptotic expansion in the correlation parameter. Then, based on this expansion, we derive a verification result for the HJB equation, which in turn, allows us to bound the value function from above and below similar to the method used e.g. in \cite{bichuch2012asymptotic} and \cite{bichuch2019optimal}. This procedure also produces a ``nearly-optimal"  strategy, and shows that the expected utility of the terminal wealth associated with this strategy is also within the same bounds as the value function.

The rest of the paper is structured in the following way. In Section \ref{sec:model}, we study in details the case of investments in one stock and a risk-free account where the returns and volatility of the stock are driven by two factors. Our asymtotics around the case of perfect correlation between these two factors reveals a simple correction to the value function, which takes into account an imperfect correlation as well as a simple strategy which generates the corrected value function.  
A proof of this accuracy is given in Section \ref{sec:subsuper}.

In Section \ref{sec:explicit1}, we extend the model studied in \cite{Chacko} which admits explicit formulas and enables us to illustrate the accuracy of our approximation. 

Finally, to demonstrate that our approach generalizes to the case with multi assets, we consider in Section \ref{sec:twoassets} two assets driven by two factors nearly fully correlated. We also extend the model of \cite{Chacko} in that case and we discuss the difference with the models considered in \cite{EVE2018}.

\section{Models with one Stock and two Factors}\label{sec:model}
We consider a model with a stock price driven by two correlated  stochastic volatility factors:
\begin{align}
\frac{dS(t)}{S(t)} &= \mu(Z_1(t), Z_2(t))dt + \sigma(Z_1(t), Z_2(t))\,dW(t),   \label{eq:S} \\
dZ_i(t) &=	 \alpha_i(Z_i(t))\,dt +\beta_i( Z_i(t))\, dB_i(t) \quad i=1,2. \label{eq:Z}
\end{align}
The three Brownian motions $B_1,B_2,$ and $W$ are defined on a filtered probability space $(\Omega, \F, ({\F_t})_{t\ge0},\P).$
We assume that the two Brownian motions $B_1,B_2$ are correlated according to $d\<B_1, B_2\>_t =\rho_{12} dt$, and that they are correlated to the  Brownian motion $W$ according to $d\<W, B_i\>_t =\rho_i dt,~i=1,2,$  with constant coefficients $\rho_{12},\rho_1,\rho_2$ such that  $|\rho_{12}|\leq 1,|\rho_1|<1,|\rho_2|<1$ and 
\begin{align}
1+2\rho_1\rho_2\rho_{12}-\rho_1^2-\rho_2^2-\rho_{12}^2\ge 0.
\label{eq:rho-cond}
\end{align}
Throughout the paper, we work under standing classical hypotheses on the coefficients of the system \eqref{eq:S}-\eqref{eq:Z} ensuring existence and uniqueness of a strong solution.

We assume also that the market contains a bond, that carries zero interest rate for convenience. Let $\pi_t$ be the number of shares of stock held at time $t$. Thus, the evolution of the wealth process $X_t$ following the self-financing strategy $\pi_t$ is given by:
\begin{align}
d X(t) = \pi_t \frac{dS(t)}{S(t)} = \pi_t  \mu(Z_1(t), Z_2(t))dt + \pi_t \sigma(Z_1(t), Z_2(t))\,dW(t),
\label{eq:wealth}
\end{align}
and the value function of the optimal investment problem with terminal time $T$ and utility $\U$ is the following:
\begin{align}\label{eq:v}
v(t,x,z_1, z_2) = \sup_{\pi} \E_{t,x,z_1, z_2}\[\U(X_T)\],
\end{align}
where $\E_{t,x,z_1, z_2}[\cdot]$   denotes the conditional expectation $\E[\cdot \vert X(t) = x, Z_1(t)=z_1, Z_2(t)=z_2]$, and the supremum is taken over all admissible Markovian strategies $\pi_s=\pi(s, X(s), Z_1(s),Z_2(s))$ such that $X(s)$ stays nonnegative for all $t\leq s\leq T$ given $X(t)=x, Z_1(t)=z_1, Z_2(t)=z_2$, and satisfy the integrability condition
\begin{align}
\E\left\{\int_0^T \pi_t^2 \,\sigma^2(Z_1(t), Z_2(t)) dt\right\}<\infty.
\label{eq:admissible}
\end{align}
In this paper we consider the case with utility functions $\U$ being  of power type: 
\begin{align}
\U(x) = \frac{x^{p}}{p},\quad p<1, \quad p\ne 0.
\end{align}
Define the differential operators
\begin{align}
\L^{\pi, \rho_1, \rho_2}_{x,z_1, z_2} &=\pi \mu(z_1, z_2) \frac{\d}{\d x}+ \frac12 \pi^2\sig^2(z_1,z_2)  \frac{\d^2}{\d x^2} + \pi \sig(z_1, z_2) \Sumi\rho_i\beta_i(z_i) \frac{\d^2}{\d x \d z_i},\label{Lpirho1rho2}\\
\L_{z_1, z_2}^{\rho_{12}} &=\Sumi \alpha_i(z_i) \frac{\d}{\d z_i} + \frac12 \Sumi \beta_i^2(z_i)  \frac{\d^2}{\d z_i^2} + \rho_{12}\beta_1(z_1)\beta_2(z_2) \frac{\d^2}{\d z_1 \d z_2}.\label{Lrho12}
\end{align}
The value function $v$ satisfies:
\begin{align}
&\d_t v + \L_{z_1, z_2}^{\rho_{12}}  v + \sup_{\pi} \L^{\pi, \rho_1, \rho_2}_{x,z_1, z_2} v =0,\label{eq:HJBfull}\\
& v(T, x, z_1, z_2) = \U(x).
\end{align}
A related problem with power utility of consumption and several factors is studied in  \cite{sheu2018} where it is proved that the associated HJB equation admits a classical solution.
Maximizating over $\pi$ gives:
\begin{align}
\pi^{*} = -\frac{\mu(z_1, z_2)}{\sigma^2(z_1, z_2)}\frac{v_x}{v_{xx}} - \frac{1 }{\sig(z_1, z_2)}\frac{\Sumi \rho_i\beta_i(z_i) v_{xi}}{v_{xx}},
\label{eq:pi}
\end{align}
where $v_{i}$ denotes a derivative with respect to $z_i,~i=1,2.$
Substituting \eqref{eq:pi} into \eqref{eq:HJBfull}, it follows that 
\begin{align}
\d_t v + \L_{z_1, z_2}^{\rho_{12}} v -\frac{\(\lam(z_1, z_2)v_x + \Sumi \rho_i\beta_i(z_i) v_{xi}\)^2}{2 v_{xx}}=0.
\label{eq:HJB}
\end{align}
where the Sharpe ratio $\lambda$ is defined by $\lam(z_1, z_2) = \frac{\mu(z_1, z_2)}{\sigma(z_1, z_2)}.$

{We proceed in the next section to solve the problem when the two factors are perfectly correlated. It turns out that this solution follows  \cite{FoHu:16}. We then compute the first order perturbation adjustment, around the perfectly correlated case. In Section \ref{sec:subsuper}, using these zero and first order perturbations, we construct sub- and super-solutions to the original PDE \eqref{eq:HJB}, and rigorously show the error of the constructed approximation.
}

\subsection{Transformation of the Non-linear HJB Equation}
We perform a distortion transformation of the HJB equation \eqref{eq:HJB} for the value  function $v$, as follows. Fix $q\in\R$ and consider 
\begin{align}\label{eq:distortion}
v(t,x, z_1, z_2) = \frac{x^{p}}{p} \(\Psi(t,z_1, z_2)\)^q.
\end{align}
Then, thanks to the power utility scaling, $\Psi$ must satisfy:
\begin{align}
&\d_t \Psi + \L_{z_1, z_2}^{\rho_{12}} \Psi +\frac \Gamma{2q} \lam^2(z_1, z_2) \Psi +\Gamma   \lam(z_1, z_2) \Sumi \rho_i\beta_i(z_i)\Psi_i \label{eq:HJB-non-lin}\\
&+ \frac{1}{2\Psi}\left(
\Sumi\( (q-1)+q \rho_i^2\Gamma\)\beta_i^2(z_i) \Psi_i^2     + \beta_1(z_1)\beta_2\(z_2)(\rho_{12} (q-1)+q \rho_1\rho_2\Gamma\)  \Psi_1\Psi_2\right)
=0,\\
&\Psi(T, z_1, z_2)=1,
\end{align}
where
we denote 
\begin{align}\label{defGamma}
\Gamma = \frac{p}{1-p},\quad \mbox{so that}\quad \Gamma >-1 \quad \mbox{and}\quad \Gamma\neq 0, \quad\mbox{since} \quad p<1,\, p\neq 0.
\end{align}
We have one degree of freedom, namely $q$, and 
in order to cancel the non-linear terms, one must have
\begin{align}
(q-1)+q \rho_i^2\Gamma=0,\, i=1,2,\quad \mbox{and}\quad \rho_{12} (q-1)+q \rho_1\rho_2\Gamma=0,
\end{align}
which can be achieved only if $\rho_{12}=1$ and $\rho_1=\rho_2$, that is when the two factors $Z_1$ and $Z_2$ are perfectly correlated.
So we digress a little to review that case.

\subsection{Fully Correlated  Factors}
We start by recalling the result from \cite{FoSiZa:13} as applied to our case. More specifically, in the case of fully correlated factors $Z_1, Z_2$ we are able to easily adapt the computations there as follows.
Let us temporarily assume that $\rho_{12}=1$, then, condition \eqref{eq:rho-cond} forces us to also assume that $\rho_1=\rho_2=\rho$, with $\abs{\rho}<1$.
We consider the ``distortion transformation" used in \cite{Za:99} and \cite{FoSiZa:13}:
\begin{align}
v(t,x, z_1, z_2) = \frac{x^{p}}{p} \(\Psi^{(0)}(t,z_1, z_2)\)^q,
\label{eq:subst1}
\end{align}
where the superscript $(0)$ indicates that this function will be the zeroth order in the asymptotics presented in the following section.
 The function $\Psi^{(0)}$ satisfies
\begin{align}
&\d_t \Psi^{(0)} + \L_{z_1, z_2}^1 \Psi^{(0)} +\frac \Gamma{2q} \lam^2(z_1, z_2) \Psi^{(0)} +\Gamma  \rho \lam(z_1, z_2) \Sumi \beta_i(z_i)\Psi_i^{(0)} \\
&+ \frac12\( (q-1)+q \rho^2\Gamma\) \frac{ \(\Sumi \beta_i(z_i)\Psi_i^{(0)}\)^2}{\Psi^{(0)}}=0,
\label{eq:non-lin}
\end{align}
where $\Gamma$ is given by \eqref{defGamma}.
Choosing 
\begin{align}\label{def:q}
q= \frac{1}{1+\Gamma \rho^2}, \quad \mbox{so that}\quad 0<q<\frac{1}{1-\rho^2}\, ,
\quad \mbox{and}\quad q=1 \quad \mbox{if}\quad \rho=0,
\end{align}
 the equation for $\Psi^{(0)}$ becomes linear:
\begin{align}
\d_t \Psi^{(0)}  + \L_{z_1, z_2}^{1,\rho}  \Psi^{(0)}+\frac \Gamma{2q} \lam^2(z_1, z_2) \Psi^{(0)}
&=0,\label{eq:Psi0}\\
\Psi^{(0)}(T, z_1, z_2)&=1,
\end{align}
where
\begin{align}
\L_{z_1, z_2}^{1,\rho} \phi =  \L_{z_1, z_2}^{1} \phi +\Gamma  \rho \lam(z_1, z_2) \Sumi \beta_i(z_i)\phi_i,
\label{eq:L}
\end{align}
and $\L_{z_1, z_2}^{1}$ is given by \eqref{Lrho12} with $\rho_{12}=1$.
Note that in this case, we may assume that $B_1(t) = B_2(t) = B(t)$, and we get a Feynman--Kac type formula:
\begin{align}
\Psi^{(0)}(t, z_1, z_2) = \tilde \E_{t,z_1,z_2}\[\e{\frac \Gamma{2q} \int_t^T \lam^2(Z_1(s), Z_2(s))ds } \],\label{eq:Psi0-E} 
\end{align}
where $\tilde \P$ is defined so that 
\begin{align}
\tilde B(t)  = B(t) -\Gamma  \rho \int_0^t \lam(Z_1(s), Z_2(s))ds 
\label{eq:drift-change}
\end{align}
is a standard Brownian motion under it.

\subsection{Asymptotics Around the Fully Correlated Case}\label{sec:full-corr}

We now go back to the general correlation structure \eqref{eq:rho-cond} and the non-linear HJB equation \eqref{eq:HJB-non-lin}. Our goal is to  expand  around the fully correlated case  when $\rho_{12} =1$, and $\rho_1 = \rho_2$, presented in the previous section. 
Accordingly, we now assume that $\rho_1, \rho_2, \rho_{12}$ have the following form:
\begin{align}
\rho_i = \rho +  \rho_i^{(1)}\eps, ~i=1,2 \mbox{ and } \rho_{12} = 1 +   \rho_{12}^{(1)} \eps,
\label{rhoeps1}
\end{align}
where $\rho_{12}^{(1)}<0$ and
$\eps$ is a small parameter,  $0<\eps\ll 1$, small enough to ensure a proper covariance structure satisfying \eqref{eq:rho-cond}. Indeed, \eqref{eq:rho-cond} then becomes:
\begin{align}
1+2\rho_1\rho_2\rho_{12}-\rho_1^2-\rho_2^2-\rho_{12}^2 = 2(\rho^2-1) \rho_{12}^{(1)} \eps +O(\eps^2) \ge0,
\end{align}
for $0<\eps\ll 1$, small enough.
Consider the ansatz
\begin{align}
v(t,x, z_1, z_2) = \frac{x^{p}}{p} \( \Psi^{(0)} (t,z_1, z_2)+\eps\Psi^{(1)} (t,z_1, z_2) +\O(\eps^{2}) \)^q 
,
\label{eq:ansatz-stoch-vol}
\end{align}
where 
the exponent $q$ is given by \eqref{def:q}: $q= \frac{1}{1+\Gamma \rho^2}=\frac{1-p}{1-p(1-\rho^2)}$.
Plugging this ansatz in the HJB equation \eqref{eq:HJB} and canceling terms of zero order in $\eps$ gives that the function $\Psi^{(0)}$ satisfies \eqref{eq:Psi0} and, therefore, is given by \eqref{eq:Psi0-E}. 
Cancelling the terms of order one in $\eps$, we deduce that the function $\Psi^{(1)}$ must satisfy:
%
\begin{align}
&\d_t\Psi^{(1)} + \L _{z_1, z_2}^{1,\rho} \Psi^{(1)} + \frac{\Gamma}{2q}\lam^2 \Psi^{(1)}  +  f_1(\Psi^{(0)}, \nabla \Psi^{(0)}, \mathbb H(\Psi^{(0)}) )=0,
\label{eq:Psi1}\\
&\Psi^{(1)}(T, z_1, z_2)=0,
\end{align}
where the operator $\L_{z_1, z_2}^{1,\rho}$ is given by \eqref{eq:L}, and 
\begin{align}
f_1(\Psi^{(0)}, \nabla \Psi^{(0)}, \mathbb H(\Psi^{(0)}) )=&
\frac{q\Gamma \rho}{\Psi^{(0)}}  \left(\beta_1 \beta_2 \Psi^{(0)}_1 \Psi^{(0)}_2(\tr_{1}-\rho  \tr_{12}+\tr_{2})+\Sumi \beta_i^2 \tr_{i} (\Psi^{(0)}_i)^2\right)\\
& +\Gamma  \lam\Sumi\tr_i\beta_i  \Psi^{(0)}_i + \tr_{12}\beta_1\beta_2\Psi^{(0)}_{12}.
\end{align}
The probabilistic representation of $\Psi^{(1)}$ is given by the Feynman--Kac  type formula:
\begin{align}
\Psi^{(1)}(t, z_1, z_2) = \tilde \E_{t,z_1,z_2}\[\int_t^T \e{\frac \Gamma{2q} \int_t^s \lam^2(Z_1(\tau), Z_2(\tau))d\tau }  f_1(\Psi^{(0)}, \nabla \Psi^{(0)}, \mathbb H(\Psi^{(0)}) )(Z_1(s), Z_2(s))ds \],\label{eq:Psi1-E} 
\end{align}
under the same probability measure  $\tilde \P$ as in \eqref{eq:Psi0-E}.

We now consider a zeroth order approximation to $\pi^{*}$ given in \eqref{eq:pi}, by substituting the zeroth order approximation for $v$ from \eqref{eq:ansatz-stoch-vol}, namely, $v(t,x, z_1, z_2) \approx \frac{x^{p}}{p} \(\Psi^{(0)}(t,z_1, z_2)\)^q$, and by using $\rho_{12} =1, \rho_1 =\rho_2 =\rho,$ the zeroth order approximation from \eqref{rhoeps1}. We obtain 
\begin{align}
\pi^{0} =\frac{x}{(1-p)\sigma}\left( \lambda + \rho q\,\frac{\Sumi \beta_i \Psi^{(0)}_i}{ \Psi^{(0)}}\right).
\label{eq:pi0}
\end{align}
{Note that $X^{(\pi^{0})} >0$, and therefore once we show the appropriate integrability conditions in Corollary \ref{corr:admissible}, it will follows that $\pi^{0}$ is an admissible strategy.}

Next, we consider the value 
\begin{align}
v^{(\pi^0)}(t,x,z_1, z_2) = \E_{t,x,z_1, z_2}\[\U(X^{(\pi^0)}_T)\],
\end{align}
obtained by following the strategy $\pi^{0}$ in \eqref{eq:wealth}. It satisfies the linear equation:
\begin{align}
&\d_t v^{(\pi^0)} + \L_{z_1, z_2}^{\rho_{12}}  v^{(\pi^0)}  + \L^{\pi^0, \rho_1, \rho_2}_{x,z_1, z_2} v^{(\pi^0)} =0,\label{eq:v0}\\
& v^{(\pi^0)} (T, x, z_1, z_2) = \U(x).
\end{align}
Consistent with the previous distortion transformation \eqref{eq:distortion} letting
\begin{align}
v^{(\pi^0)}(t,x, z_1, z_2) = \frac{x^{p}}{p} \( \Psi^{(\pi^0)} (t,z_1, z_2)\)^q,\label{eq:v0expand}
\end{align}
it follows that $\Psi^{(\pi^0)}$ 
solves:
\begin{align}
&\frac{1-q}{2} \Psi^{(0)} \(\beta_1^2 \(\Psi^{(\pi^0)}_1\)^2 +\beta_1\beta_2\rho_{12} \Psi^{(\pi^0)}_1\Psi^{(\pi^0)}_2 + \beta_2^2 \(\Psi^{(\pi^0)}_2\)^2   \)+\frac{\Gamma}{2q} \Psi^2\( q^2 \rho^2 \(\Sumi \beta_i \Psi^{(0)}_i\)^2 -\lam^2 (\Psi^{(0)})^2\)\\
&-\Psi^{(\pi^0)} \Psi^{(0)} \( \d_t \Psi^{(\pi^0)} + \L_{z_1, z_2}^{\rho_{12}} \Psi^{(\pi^0)} +\frac \Gamma{2q} \lam^2(z_1, z_2) \Psi^{(\pi^0)} +\Gamma   \lam(z_1, z_2) \Sumi \rho_i\beta_i(z_i)\Psi_i^{\pi^0} \)=0,\\
&\Psi^{(\pi^0)}(T, z_1, z_2)=1.
\end{align}
A classical regular expansion argument for PDEs (as in \cite{FoSiZa:13}[Section 6.3.2] for instance) shows that 
\begin{align}
 \Psi^{(\pi^0)}  = \Psi^{(0)}+\eps\Psi^{(1)} +\O(\eps^{2}),
\end{align}
where the function $ \Psi^{(0)}$ and $ \Psi^{(1)}$ are exactly those obtained in the previous section in \eqref{eq:Psi0} and \eqref{eq:Psi1} respectively. Therefore, up to the first order in $\eps$, $v^{(\pi^0)}$ is identical to $v$ expanded heuristically in \eqref{eq:ansatz-stoch-vol}. Once we prove in Section \ref{sec:subsuper} that the expansion \eqref{eq:ansatz-stoch-vol} for $v$ is accurate, we will also be able to conclude that
the strategy $\pi^{0}$ given by \eqref{eq:pi0} generates up to order $\eps$ the value $v$ given by \eqref{eq:v} or \eqref{eq:HJBfull}.

\section{Proof of Accuracy}\label{sec:subsuper}

We now go back to the general case as in Section \ref{sec:model}. The goal is to make rigorous the previous heuristic results. In other words, we prove that the expansion in \eqref{eq:ansatz-stoch-vol} is correct. Moreover, as explained at the end of Section \ref{sec:full-corr}, we justify that the zeroth order strategy $\pi^{0}$ from \eqref{eq:pi0} indeed, achieves the maximum value up to order $\eps$.

Recall the original HJB equation \eqref{eq:HJB} for the value  function $v$, the distortion transformation \eqref{eq:distortion} and the resulting non-linear HJB equation for $\Psi$ \eqref{eq:HJB-non-lin}.

Note that we still assume that $q$ is given by \eqref{def:q}, however, \eqref{eq:HJB-non-lin} the equation for $\Psi$ remains fully nonlinear. The distortion transformation \eqref{eq:distortion} will be key to build sub- and super-solutions for \eqref{eq:HJB}, but first, we need some smoothness properties for the functions $\Psi^{(0)}$ and $\Psi^{(1)}$. In this section we will commonly use the notation that a function $f$ is bounded away from zero, which we define as $f$ is such that $\inf_{x\in \mbox{ domain of }  f} \abs{f(x)} >0$.   

\subsection{Smoothness of $\Psi^{(0)}$ and $\Psi^{(1)}$}\label{sec:smoothness}

We have the following:
\begin{lemma}\label{lem:smoothness}
Assume that  $\lam, \alpha_i, \beta_i,~i=1, 2$ are bounded, twice differentiable with bounded derivatives, and that $\sig$, 
$ \beta_i,~i=1, 2$ are bounded away from zero. 
Then, $\Psi^{(0)}$ and $\Psi^{(1)}$, the solutions of \eqref{eq:Psi0} and \eqref{eq:Psi1} respectively, exist and they are unique and bounded. Moreover, their derivatives up to order two are bounded. {Additionally, $\Psi^{(0)}$ and $\Psi^{(1)}$ are also given by their Feynman--Kac representations \eqref{eq:Psi0-E} and \eqref{eq:Psi1-E}  respectively. } 
\end{lemma}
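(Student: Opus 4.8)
The plan is to treat \eqref{eq:Psi0-E} and \eqref{eq:Psi1-E} as the definitions of the candidate functions and to establish existence, regularity, and the PDE characterization directly from these probabilistic representations, so that the degeneracy of the second-order operator $\L^1_{z_1,z_2}$ (which is only rank one, since $\rho_{12}=1$ forces $B_1=B_2$) never has to be confronted through classical uniformly-parabolic PDE theory. First I would record that, under $\tilde\P$, the pair $(Z_1,Z_2)$ solves an SDE driven by the single Brownian motion $\tilde B$ with drift $\alpha_i(z_i)+\Gamma\rho\lam\beta_i(z_i)$ and diffusion $\beta_i(z_i)$; by the standing hypotheses these coefficients are globally Lipschitz and $C^2_b$, so there is a unique strong solution and, by Kunita's theory of stochastic flows, the map $z\mapsto Z(s;t,z)$ is a.s. twice continuously differentiable, the first- and second-order derivative processes solving linear SDEs whose coefficients are bounded derivatives of $\alpha_i,\beta_i,\lam$. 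Standard moment estimates then give bounds on these derivative processes that are uniform in $(t,z,s)$ over $[0,T]$.

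For $\Psi^{(0)}$: since $\lam$ is bounded, the exponent $\frac\Gamma{2q}\int_t^T\lam^2\,ds$ is bounded, so \eqref{eq:Psi0-E} is finite and $\Psi^{(0)}$ is bounded above and below by positive constants (the lower bound being what makes the factor $1/\Psi^{(0)}$ in $f_1$ harmless). Differentiating under the expectation in \eqref{eq:Psi0-E}, justified by the uniform moment bounds on the flow derivatives together with dominated convergence, yields that $\Psi^{(0)}$ is $C^2$ in $(z_1,z_2)$ with bounded first and second derivatives; the crucial point is that this uses only smoothness of the flow and of the potential $\lam^2$, not ellipticity, so the degeneracy is irrelevant here. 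Joint $C^{1,2}$ regularity and the PDE \eqref{eq:Psi0} then follow by Feynman--Kac verification: $M_s=\exp\{\frac\Gamma{2q}\int_t^s\lam^2\,d\tau\}\,\Psi^{(0)}(s,Z_1(s),Z_2(s))$ is a $\tilde\P$-martingale, and applying It\^o's formula forces its drift, namely $\d_t\Psi^{(0)}+\L^{1,\rho}_{z_1,z_2}\Psi^{(0)}+\frac\Gamma{2q}\lam^2\Psi^{(0)}$, to vanish.

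For $\Psi^{(1)}$ the boundedness and the PDE \eqref{eq:Psi1} go through identically: with $\Psi^{(0)}\in C^2_b$ the source $f_1$ is bounded and continuous, so \eqref{eq:Psi1-E} defines a bounded function, and the same martingale argument (now with an accumulated source term) shows it solves \eqref{eq:Psi1}. Uniqueness for both equations, in the class of bounded classical solutions, is then a by-product: any such solution makes the corresponding exponential (plus source) process a martingale, hence must coincide with the Feynman--Kac expectation, which simultaneously delivers the asserted representations.

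I expect the main obstacle to be the bound on the \emph{second} derivatives of $\Psi^{(1)}$. Because $f_1$ already contains a second derivative of $\Psi^{(0)}$ (the term $\tr_{12}\beta_1\beta_2\Psi^{(0)}_{12}$), differentiating \eqref{eq:Psi1-E} twice in $z$ requires control of $\Psi^{(0)}$ up to order four, hence of the stochastic flow up to order four. I would obtain this by carrying the flow-differentiation argument of the first step to the required order, which is legitimate provided the coefficients are smooth enough; reconciling this with the stated hypotheses, and doing it in the degenerate setting where no elliptic smoothing is available to cheaply upgrade the regularity of $f_1$, is the delicate technical point. An alternative I would keep in reserve is to regularize $\L^1_{z_1,z_2}$ by adding $\frac\delta2\Delta$, solve the resulting uniformly parabolic problems classically, and pass to the limit using flow-derivative bounds that are uniform in $\delta$ because the derivative-flow SDE does not see the constant-coefficient $\sqrt\delta$ perturbation.
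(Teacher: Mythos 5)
Your proposal is correct in outline, but it runs in the opposite logical direction from the paper's proof. The paper is PDE-first: it notes that $\L_{z_1,z_2}^{1,\rho}$ is (degenerate) elliptic, gets existence and uniqueness of a classical solution of the linear equation \eqref{eq:Psi0} from Oleinik's theorem \cite{oleinik1965smoothness}, then invokes the Feynman--Kac theorem of \cite{karatzas1998brownian} to obtain the representation \eqref{eq:Psi0-E} and boundedness, and finally differentiates the PDE itself, producing linear systems for $D\Psi^{(0)}$ (and then for the second derivatives) whose own Feynman--Kac representations yield the bounds; the same scheme is asserted for $\Psi^{(1)}$. You are probability-first: \eqref{eq:Psi0-E} and \eqref{eq:Psi1-E} are taken as definitions, Kunita flow-smoothness justifies differentiating under the expectation, and an It\^o/martingale verification recovers the PDEs together with uniqueness in the class of bounded classical solutions. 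What your route buys is self-containedness: it never touches degenerate-parabolic existence theory, which is exactly the delicate citation in the paper since the operator is driven by a single Brownian motion; the two routes are in fact dual, because the Feynman--Kac formula for the differentiated PDE system is precisely the derivative of the Feynman--Kac formula along the flow. What it costs is regularity bookkeeping: Kunita's theorem wants a H\"older modulus on the top derivatives of the coefficients (a hair more than ``twice differentiable with bounded derivatives''), and, as you correctly flag, bounding the second derivatives of $\Psi^{(1)}$ requires fourth-order control of $\Psi^{(0)}$ because $f_1$ contains $\tr_{12}\beta_1\beta_2\Psi^{(0)}_{12}$. It is worth noting that this last difficulty is not an artifact of your method: the paper's proof faces the identical tension, since its twice-differentiated equation for $\Psi^{(1)}$ has source terms involving two derivatives of $f_1$, hence four of $\Psi^{(0)}$, and the paper absorbs this into the ``straightforward lengthy details'' it omits. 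Of your two remedies, carrying the flow differentiation to higher order under strengthened smoothness is what the paper implicitly does, while the elliptic regularization by $\frac{\delta}{2}\Delta$ with $\delta$-uniform flow bounds is a clean alternative device not present in the paper.
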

\begin{proof}
We show the proof for $\Psi^{(0)}$, whereas the proof for $\Psi^{(1)}$ is similar.

First, note that under our coefficient assumptions, the operator $\L_{z_1,z_2}^{1,\rho}$ appearing in \eqref{eq:L}  is (degenerate) elliptic. Then, existence and uniqueness of the classical solution $\Psi^{(0)}$ of \eqref{eq:Psi0} follows from \cite{oleinik1965smoothness}[Theorem 6]. 
{Therefore, it is easily seen that all the assumptions of Feynman--Kac formula in \cite{karatzas1998brownian}[Theorem 5.7.6] 
hold. Thus, from \eqref{eq:Psi0-E}, 
it follows  that $\Psi^{(0)}$ is bounded.}

Since $\Psi^{(0)}$ is a classical solution to \eqref{eq:Psi0}, it is differentiable, and we can consider $\Psi^{(0)}_i$, its derivative with respect to $z_i, i=1,2$. 
By differentiating \eqref{eq:Psi0}, we obtain the system of PDEs:
\begin{align}
&\d_t \Psi^{(0)}_i  +\frac \Gamma{2q} \lam^2(z_1, z_2) \Psi^{(0)}_i+ \L_{z_1,z_2}^{1,\rho} \Psi^{(0)}_i\\
&+\(\alpha_i'(z_i) + \beta_i(z_i)\beta_i'(z_i)  \d_{z_i}  + \beta_i'(z_i)\beta_j(z_j) \d_{z_j}+\Gamma  \rho \(\lam_i(z_1, z_2) \beta_i(z_i)  + \lam(z_1, z_2) \beta_i'(z_i)\) \) \Psi^{(0)}_i\\
&= - \frac \Gamma{q} \lam(z_1, z_2) \lam_i(z_1, z_2) \Psi^{(0)} - \Gamma  \rho \lam_i(z_1, z_2) \beta_j(z_j) \Psi^{(0)}_j
,\label{eq:Psi0_1}\\
&\Psi^{(0)}_i(T, z_1, z_2)=0,~i,j=1,2~i\ne j.
\end{align}
Note that $\lam_i$ here, as per our convention, denotes the partial derivative of $\lam$ with respect to $z_i,~i=1,2.$ Denoting by $D\Psi^{(0)}$ the vector $(\Psi^{(0)}_1, \Psi^{(0)}_2)^T$ and by $D\lam$ the vector$(\lam_1,\lam_2)^T$, the system of equations \eqref{eq:Psi0_1} can be rewritten:
\begin{align}
&\d_t D\Psi^{(0)}+\L _{z_1,z_2}^{1,\rho}I_{2\times 2}D\Psi^{(0)}+V(z_1,z_2)D\Psi^{(0)}+\frac \Gamma{q} \lam(z_1, z_2)\Psi^{(0)}D\lam=0,
&D\Psi^{(0)}(T, z_1, z_2)=0,
\end{align}
where $ I_{2\times 2}$ is the $2\times 2$ identity matrix, $V$ is a $2\times 2$ potential matrix, and the last term being a source term.

Therefore, the assumptions of \cite{karatzas1998brownian}[Theorem 5.7.6]  again hold, and $D\Psi^{(0)}$ is given by the Feynman-Kac formula, 
\begin{equation*}
D\Psi^{(0)} (t,z_1,z_2)=\tilde \E_{t,z_1,z_2}\left[ \int_t^Te^{\int_t^sV(Z_1(u),Z_2(u))du}\( \frac \Gamma{q} \lam\Psi^{(0)}D\lam \) (Z_1(s), Z_2(s))ds \right].
\end{equation*}

Under our coefficient assumptions, this shows that $\Psi^{(0)}_1$ and $\Psi^{(0)}_2$ are bounded. Differentiating the system  \eqref{eq:Psi0_1} with respect to $z_i, i=1,2$, one obtains equations for the second order derivatives $\Psi^{(0)}_{11}, \Psi^{(0)}_{12}, \Psi^{(0)}_{22}$ and their boundedness is derived by using again a Feynman--Kac representation and our coefficient assumptions.  Here, we omit these straightforward lengthy details as well as the calculation details for $\Psi^{(1)}$ given by \eqref{eq:Psi1} and its derivatives. {Finally, we similarly conclude that the Feynman--Kac representation  \eqref{eq:Psi1-E} of  $\Psi^{(1)}$  holds.}

\end{proof}
\begin{corollary}\label{corr:admissible}
Under the assumptions of Lemma \ref{lem:smoothness}, the strategy $\pi^{0}$ given in \eqref{eq:pi0} is admissible.
\end{corollary}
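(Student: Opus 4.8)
The plan is to exploit that the feedback control \eqref{eq:pi0} is linear in the wealth variable, so that the controlled wealth solves a linear (geometric) SDE with bounded coefficients. Writing $\pi^0_s = X(s)\, h(s,Z_1(s),Z_2(s))$ with
\begin{equation*}
h(t,z_1,z_2) = \frac{1}{(1-p)\sig(z_1,z_2)}\left(\lam(z_1,z_2) + \rho q\,\frac{\Sumi \beta_i(z_i)\Psi^{(0)}_i(t,z_1,z_2)}{\Psi^{(0)}(t,z_1,z_2)}\right),
\end{equation*}
the positivity of the wealth (hence the nonnegativity requirement in the definition of admissibility) will follow automatically once $X$ is written as a stochastic exponential, since the power utility forces $x>0$. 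The real content is then the integrability condition \eqref{eq:admissible}.

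The first step is to show that $\Psi^{(0)}$ is bounded away from zero on $[0,T]$, a fact not recorded in Lemma \ref{lem:smoothness} but read off from the Feynman--Kac representation \eqref{eq:Psi0-E}: when $\Gamma>0$ the exponent is nonnegative, so $\Psi^{(0)}\ge 1$, while when $\Gamma<0$ the boundedness of $\lam$ gives the lower bound $\Psi^{(0)}\ge \e{\frac{\Gamma}{2q}\|\lam\|_\infty^2 T}>0$. The key point is to estimate not $\pi^0$ and $\sig$ separately --- note that $\sig$ is assumed bounded away from zero but \emph{not} bounded above --- but rather the product actually appearing in \eqref{eq:admissible}:
\begin{equation*}
\pi^0_s\,\sig = \frac{X(s)}{1-p}\left(\lam + \rho q\,\frac{\Sumi \beta_i \Psi^{(0)}_i}{\Psi^{(0)}}\right) =: X(s)\, g(s,Z_1(s),Z_2(s)).
\end{equation*}
By Lemma \ref{lem:smoothness} ($\Psi^{(0)}$ and $\Psi^{(0)}_i$ bounded), the lower bound just obtained on $\Psi^{(0)}$, and the assumed boundedness of $\lam$ and $\beta_i$, the function $g$ is bounded, say $|g|\le G$. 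Consequently the drift coefficient of the wealth SDE, $h\mu = g\lam$, is bounded as well.

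With these bounds, \eqref{eq:wealth} reads $dX = X\,g\lam\,dt + X\,g\,dW$, a linear SDE with bounded coefficients whose solution
\begin{equation*}
X_t = x\,\Exp{\int_0^t \Big(g\lam - \tfrac12 g^2\Big)\,ds + \int_0^t g\,dW}
\end{equation*}
is strictly positive, which settles the nonnegativity requirement. For the integrability, boundedness of $g\lam - \tfrac12 g^2$ together with the fact that the stochastic exponential of $2\int_0^{\cdot} g\,dW$ is a true martingale (Novikov, since $g$ is bounded) yields $\E[X_t^2]\le C$ uniformly in $t\in[0,T]$. Hence
\begin{equation*}
\E\left\{\int_0^T (\pi^0_t)^2 \sig^2\,dt\right\} = \E\left\{\int_0^T X_t^2\, g^2\,dt\right\} \le G^2\int_0^T \E[X_t^2]\,dt < \infty,
\end{equation*}
which is exactly \eqref{eq:admissible}. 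The step requiring the most care is the one just flagged: because $\sig$ is not assumed bounded above, one must keep $\pi^0\sig$ together rather than bounding $\pi^0$ on its own, and one must separately establish the lower bound on $\Psi^{(0)}$ that is absent from Lemma \ref{lem:smoothness}.
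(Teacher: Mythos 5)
Your proof is correct and takes essentially the same route as the paper's: positivity of the wealth from its generalized geometric Brownian motion structure, boundedness of the feedback coefficient via Lemma \ref{lem:smoothness} together with the lower bound on $\Psi^{(0)}$ read off from the Feynman--Kac representation \eqref{eq:Psi0-E}, and then the integrability condition \eqref{eq:admissible}. You merely make explicit two details the paper leaves implicit, namely the proof that $\Psi^{(0)}$ is bounded away from zero and the uniform second-moment bound on the wealth process.
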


\begin{proof}
Under our assumptions from \eqref{eq:Psi0-E} , we have that $\Psi^{(0)}$ is bounded away from zero. Moreover, from Lemma \ref{lem:smoothness}, we have that $\Psi^{(0)}_i,~i=1,2$ are bounded. Therefore it follows that $\frac{1}{(1-p)\sigma}\left( \lambda + \rho q\,\frac{\Sumi \beta_i \Psi^{(0)}_i}{ \Psi^{(0)}}\right)$ is also bounded. 
Thus from the definition of $\pi^0$ in \eqref{eq:pi0}, it follows that $X^{(\pi^{0})}$ given by \eqref{eq:wealth} is a generalized geometric Brownian motion, and thus is positive. Additionally, $\pi^{0}$ satisfies the admissibility constraint \eqref{eq:admissible}.
\end{proof}

\subsection{Building Sub- and Super-Solutions}\label{sec:subsuper2}

The goal is now to  obtain bounds for the value function $v$, solution to  the HJB equation \eqref{eq:HJB}, and to justify the approximation \eqref{eq:ansatz-stoch-vol}.
Consider $\Psi^{(0)}$ and $\Psi^{(1)}$ given as  solutions to \eqref{eq:Psi0} and  \eqref{eq:Psi1} respectively and under the assumptions of Lemma \ref{lem:smoothness}. Using those and the distortion transformation \eqref{eq:distortion}, define
\begin{align}
v^\pm (t,x, z_1, z_2) = \frac{x^{p}}{p} \( \Psi^{(0)} (t,z_1, z_2) + \eps \Psi^{(1)} (t,z_1, z_2)\pm \eps^2M(T-t) \)^q ,
\label{eq:sub-super}
\end{align}
where $M>0$ is a constant to be determined later independently of $\eps$, and where $q$ is given by \eqref{def:q}. Here, we assume $p<0$ to start with and the case $0<p<1$ will be treated in Section \ref{sec:pnegative}.

Observe that from the boundary conditions of  $\Psi^{(0)}$ and $\Psi^{(1)}$, we have
$
 v^\pm(T, x, z_1, z_2) = \U(x).
$
Note also that from the Feynman--Kac formula \eqref{eq:Psi0-E}, the function $\Psi^{(0)}$ is bounded, positive, and bounded away from zero. On the other hand, the function $\Psi^{(1)}$ is bounded, and, therefore, for $\eps$ small enough, $\Psi^{(0)} (t,z_1, z_2) + \eps \Psi^{(1)} (t,z_1, z_2) \pm\eps^2M(T-t)>0$, and consequently, $v^\pm$ is well defined.


\subsubsection{Strategy of the proof of accuracy}\label{sec:proofstrategy}
Recall the HJB equation \eqref{eq:HJBfull} and its two operators $\L_{z_1, z_2}^{\rho_{12}} $ and $\L^{\pi, \rho_1, \rho_2}$ defined in \eqref{Lpirho1rho2} and \eqref{Lrho12} respectively. 
From it, we define the operator $Q^\pi$ 
\begin{align}
Q^\pi=\d_t  + \L_{z_1, z_2}^{\rho_{12}}  + \L^{\pi,\rho_1,\rho_2}_{x,z_1, z_2},
\end{align}
where $\rho_{12} = 1 +  \rho_{12}^{(1)}\eps, ~\rho_i = \rho +  \rho_{i}^{(1)}\eps$.
We will show that there exists $M$ such that for $\eps$ small enough we have
\begin{align}\label{Qestimates}
Q^{\pi^0}[v^+]\geq 0,\,\,\mbox{and}\quad
\sup_{\pi}Q^{\pi}[v^-]\leq 0,
\end{align}
where the strategy $\pi^0$ is given by \eqref{eq:pi0} and the strategy $\pi$ is any admissible strategy. 
By It\^o's formula and a justification of the martingale property which will be given later, we then conclude that
\begin{align}
v(t,x,z_1,z_2)&\geq \E_{t,x,z_1,z_2}\left[\U(X^{(\pi^0)}(T))\right]=\E_{t,x,z_1,z_2}\left[v^+(T,X^{(\pi^0)}(T),Z_1(T),Z_2(T))\right]\\
&=v^+(t,x,z_1,z_2)+\E_{t,x,z_1,z_2}\left[\int_t^TQ^{\pi^0}[v^+](s,X^{(\pi^0)}(s),Z_1(s),Z_2(s))ds\right]\\
&\geq v^+(t,x,z_1,z_2),\label{subestimate}
\end{align}
\begin{align}
 \E_{t,x,z_1,z_2}\left[\U(X^{(\pi)}(T))\right]&=\E_{t,x,z_1,z_2}\left[v^-(T,X^{(\pi)}(T),Z_1(T),Z_2(T))\right]\\
&=v^-(t,x,z_1,z_2)+\E_{t,x,z_1,z_2}\left[\int_t^TQ^{\pi}[v^-](s,X^{(\pi)}(s),Z_1(s),Z_2(s))ds\right]\\
&\leq v^-(t,x,z_1,z_2)+\E_{t,x,z_1,z_2}\left[\int_t^T\sup_\pi Q^{\pi}[v^-](s,X^{(\pi)}(s),Z_1(s),Z_2(s))ds\right]\\
&\leq v^-(t,x,z_1,z_2),\label{superestimatepi}
\end{align}
and, by taking a supremum over $\pi$:
\begin{align}
v(t,x,z_1,z_2)=\sup_{\pi} \E_{t,x,z_1,z_2}\left[\U(X^{(\pi)}(T))\right]\leq v^-(t,x,z_1,z_2).
\label{superestimate}
\end{align}

In other words,  $v^+$ is a submartingale along $\pi^0$ and 
$v^-$  is a supermartingale along any admissible $\pi$. In turn,
\eqref{subestimate} and \eqref{superestimate} show that $v^-$ is a sub-solution and $v^+$ is a super-solution. Using again the definition \eqref{eq:distortion} of $v^\pm$,  we deduce that our proposed approximation is accurate at the order $\eps$:
\begin{align}\label{approx}
\left|v- \frac{x^{p}}{p} \( \Psi^{(0)} + \eps \Psi^{(1)} \)^q \right|=x^p\O(\eps^2),
\end{align}
uniformly in $(t,z_1,z_2)$.
This is formalized in the following:

\begin{theorem}\label{TheTheorem}
In addition to the coefficient assumptions in Lemma \ref{lem:smoothness}, we assume that 
$\lam$ is bounded and bounded away from zero, and $p<0$.  Then, there exits a constant $M>0$ such that, for $\eps$ small enough, the functions $v^\pm$  defined in \eqref{eq:sub-super} are super- and sub-solutions, and the accuracy of approximation \eqref{approx} holds. Moreover, the strategy  $\pi^{0}$  given by \eqref{eq:pi0}, is ``nearly-optimal'', in other words, if followed, then the expected utility of the terminal wealth will differ from the value function by $\O(\eps^2)$, i.e.
\begin{align}\label{pi0optimal}
0\leq v(t,x,z_1, z_2)-\E_{t,x,z_1,z_2}\[ \frac{1}{p}\left(X_T^{(\pi^{0}) }\right)^p\] =x^p\O(\eps^2),
\end{align}
uniformly in $(t,z_1,z_2)$. 
\end{theorem}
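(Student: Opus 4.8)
The plan is to reduce the whole statement to the two operator inequalities \eqref{Qestimates} and then feed them into the verification argument already laid out in Section \ref{sec:proofstrategy}. The first observation I would use is that the distortion transformation \eqref{eq:distortion} is an exact algebraic identity: for any smooth positive $\Phi$, applying the HJB operator to $\frac{x^p}{p}\Phi^q$ factors as $\frac{x^p}{p}\,q\,\Phi^{q-1}\,\mathcal{N}[\Phi]$, where $\mathcal{N}$ is the fully nonlinear operator in the $\Psi$-equation displayed just after \eqref{eq:distortion}. Since $0<p<1$ and $\Phi^\pm>0$ for $\eps$ small, this prefactor is strictly positive, so $\sup_\pi Q^\pi[v^+]\le0$ is equivalent to $\mathcal{N}[\Phi^+]\le0$. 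Freezing the strategy at $\pi^0$ turns $Q^{\pi^0}[v^-]\ge0$ into a sign condition $\mathcal{G}[\Phi^-]\ge0$, where $\mathcal{G}$ is the operator characterising $v^{(\pi^0)}$ through \eqref{eq:v0}; crucially $\mathcal{G}$ agrees with $\mathcal{N}$ at orders $\eps^0$ and $\eps^1$, both reproducing \eqref{eq:Psi0} and \eqref{eq:Psi1}, since $v$ and $v^{(\pi^0)}$ coincide to first order by \eqref{eq:v0expand}. This isolates the analytic core in a single expansion.

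Next I would Taylor expand $\mathcal{N}[\Phi^\pm]$ (and $\mathcal{G}[\Phi^-]$) in $\eps$, using that the correlations are affine in $\eps$ through \eqref{rhoeps1} and that $\Phi^\pm=\Psi^{(0)}+\eps\Psi^{(1)}\pm\eps^2M(T-t)$. By the very definitions \eqref{eq:Psi0} and \eqref{eq:Psi1} of $\Psi^{(0)}$ and $\Psi^{(1)}$, the terms of order $\eps^0$ and $\eps^1$ cancel identically, leaving a residual of order $\eps^2$. Lemma \ref{lem:smoothness}, which bounds $\Psi^{(0)},\Psi^{(1)}$ and their first and second derivatives and keeps $\Psi^{(0)}$ away from zero, together with the boundedness of $\lam,\alpha_i,\beta_i$, guarantees that this residual is bounded uniformly in $(t,z_1,z_2)$ by a constant independent of $\eps$ and $M$. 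The barrier $\pm\eps^2M(T-t)$ contributes through $\d_t$ a term $\mp\eps^2M$ of definite sign. The main obstacle is exactly here: the distortion produces a zeroth-order term $\tfrac{\Gamma}{2q}\lam^2\Phi$ with a positive coefficient (the value function is anti-dissipative, cf.\ \eqref{eq:Psi0-E}), so the barrier also generates a competing term $\pm\eps^2M\tfrac{\Gamma}{2q}\lam^2(T-t)$; one must check, invoking that $\lam$ is bounded, that for $\eps$ small the $\mp\eps^2M$ contribution dominates both the uniform $\eps^2$ residual and this competing term, so that a single constant $M>0$, independent of $\eps$, forces $\mathcal{N}[\Phi^+]\le0$ and $\mathcal{G}[\Phi^-]\ge0$. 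This is the step where the quantitative bounds of Lemma \ref{lem:smoothness}, not merely the formal expansion, are essential.

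With \eqref{Qestimates} secured I would justify the martingale step underlying \eqref{subestimate}--\eqref{superestimate}. For the upper bound, $v^+\ge0$ (valid for $0<p<1$) and $Q^\pi[v^+]\le\sup_\pi Q^\pi[v^+]\le0$ make $v^+(s,X^{(\pi)}_s,Z_1(s),Z_2(s))$ a nonnegative local supermartingale, hence a genuine supermartingale by Fatou, so $\E_{t,x,z_1,z_2}[\U(X^{(\pi)}_T)]\le v^+(t,x,z_1,z_2)$ for every admissible $\pi$, and taking the supremum gives \eqref{superestimate}. For the lower bound I would exploit that $X^{(\pi^0)}$ is an explicit positive generalized geometric Brownian motion and that $\pi^0$ is admissible by Corollary \ref{corr:admissible}; the boundedness of $\Psi^{(0)},\Psi^{(1)}$ and their derivatives, together with \eqref{eq:admissible}, makes the stochastic integral arising from Itô's formula a true martingale, so the equality in \eqref{subestimate} holds and $Q^{\pi^0}[v^-]\ge0$ yields $v\ge v^-$.

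Finally I would read off \eqref{approx} and \eqref{pi0optimal} from the sandwich $v^-\le v\le v^+$. Because $\Phi^+$ and $\Phi^-$ differ only by $2\eps^2M(T-t)$ and $(\,\cdot\,)^q$ is smooth with bounded derivative on the relevant range, where $\Psi^{(0)}+\eps\Psi^{(1)}$ is bounded away from zero, we have $v^\pm=\frac{x^p}{p}(\Psi^{(0)}+\eps\Psi^{(1)})^q+x^p\O(\eps^2)$ uniformly in $(t,z_1,z_2)$, which is \eqref{approx}. For \eqref{pi0optimal}, nonnegativity is immediate since $v$ is the supremum and $\pi^0$ is admissible, while the upper estimate follows from $v\le v^+$ and $\E_{t,x,z_1,z_2}[\tfrac1p(X^{(\pi^0)}_T)^p]\ge v^-$, the latter being precisely the chain \eqref{subestimate}, combined with $v^+-v^-=x^p\O(\eps^2)$. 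The case $p<0$, where the signs in the construction \eqref{eq:sub-super} are reversed, is deferred to Section \ref{sec:pnegative}.
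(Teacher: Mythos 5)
Your overall architecture is the one the paper uses: reduce the theorem to the two operator inequalities \eqref{Qestimates}, obtain them by expanding the distorted operator applied to $v^\pm$ and using \eqref{eq:Psi0}--\eqref{eq:Psi1} to cancel the orders $\eps^0$ and $\eps^1$, choose $M$, then run the sub/supermartingale verification and read off \eqref{approx} and \eqref{pi0optimal} from the sandwich $v^-\leq v\leq v^+$. Your factorization of the HJB operator through the nonlinear operator $\mathcal N$ is a clean (and correct) repackaging of the paper's direct computation, and your Fatou argument on the supermartingale side is a legitimate, arguably tidier, way to handle what the paper dispatches with a remark on boundedness; the justification that $\mathcal G$ agrees with $\mathcal N$ to first order via \eqref{eq:v0expand} is somewhat circular (the honest reason is that $\pi^0$ maximizes the zeroth-order Hamiltonian, so freezing $\pi=\pi^0$ costs a quadratic penalty of order $\eps^2$ of definite sign --- this is exactly the paper's $\Theta^2$ term in \eqref{def:Theta}), but that is a presentational issue.

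The genuine gap is at the decisive quantitative step, the sign of the order-$\eps^2$ term. You correctly observe that the barrier $\pm\eps^2M(T-t)$ enters not only through $\d_t$ (contributing $\mp\eps^2M$) but also through the zeroth-order potential $\frac{\Gamma}{2q}\lam^2$ (contributing $\pm\eps^2M\frac{\Gamma}{2q}\lam^2(T-t)$). But your resolution --- that ``for $\eps$ small'' the $\mp\eps^2M$ contribution dominates this competing term --- is a non sequitur: both terms carry exactly the same factor $\eps^2M$, so their comparison is independent of $\eps$ and of $M$; smallness of $\eps$ only helps against the $\O(\eps^3)$ remainders, and largeness of $M$ only helps against the $M$-independent residual (the paper's $\Phi/(2(1-p)(\Psi^{(0)})^2)$ from \eqref{eq:Phi}). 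With the bracket $[1-\frac{\Gamma}{2q}\lam^2(T-t)]$ that your linearization produces, the argument closes only under an extra hypothesis such as $\frac{\Gamma}{2q}\sup\lam^2\,T<1$, which the theorem does not assume, or with a modified barrier, e.g. $\pm\eps^2M(\e{\gamma(T-t)}-1)$ with $\gamma\geq\frac{\Gamma}{2q}\sup\lam^2$, which still vanishes at $t=T$ and whose $\d_t$ contribution then dominates the potential pointwise. The paper's proof does not invoke any domination-by-$\eps$ at this step: its explicit computations \eqref{eq:v-} and \eqref{eq:v+} assert that the net coefficient of $M$ at order $\eps^2$ is $\pm M\left[1+\frac{\Gamma}{2q}\lam^2(T-t)\right]$, i.e. the two contributions reinforce, and this bracket is positive precisely because $\Gamma>0$ when $0<p<1$ (its possible sign change for $p<0$ is what forces the redefinition \eqref{eq:sub-super-pnegative}); only then is $M$ taken large against the bounded $\Phi$-term and $\eps$ small against the $\O(\eps^3)$ terms. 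So whichever bracket is the true outcome of the computation, your proof as written does not establish \eqref{Qestimates}: you must actually carry out the order-$\eps^2$ computation, as the paper does in \eqref{eq:v-}--\eqref{eq:Phi} and \eqref{eq:v+}--\eqref{def:Theta}, and exhibit a coefficient of $M$ with a definite sign, rather than appeal to $\eps$-smallness against a term of the same order.
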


\begin{proof}
The proof follows the argument presented at the begining of Section \ref{sec:proofstrategy} and will mainly consists in deriving 
the key inequalities \eqref{Qestimates}.
Recall that $\rho_{12} = 1 +  \rho_{12}^{(1)}\eps, ~\rho_i = \rho +  \rho_{i}^{(1)}\eps$, and that the strategy $\pi^{0}$ is given by \eqref{eq:pi0}.

\subsubsection{Super-solution, computation of $Q^{\pi^0}[v^+]$}\label{sec:subsol}
By direct computation, we get:
\begin{align}
&\frac{Q^{\pi^0}[v^+]}{\( \Psi^{(0)}  + \eps \Psi^{(1)} +\eps^2M(T-t) \)^{q-2} }
=
q\(\frac{x^p}{p}\( \Psi^{(0)}  + \eps \Psi^{(1)} +\eps^2M(T-t) \)  \)
 \(\d_t \Psi^{(0)}  +\frac \Gamma{2q} \lam^2\Psi^{(0)}+ \L ^{1,\rho}_{z_1,z_2}\Psi^{(0)}\) \\
&\qquad+\eps q\(\frac{x^p}{p}\( \Psi^{(0)}  + \eps \Psi^{(1)} + \eps^2M(T-t) \) \)  \(\d_t\Psi^{(1)} + \L^{1,\rho}_{z_1,z_2}  \Psi^{(1)} + \frac{\Gamma}{2q}\lam^2 \Psi^{(1)}  -  f_1(\Psi^{(0)}, \nabla \Psi^{(0)}, \mathbb H(\Psi^{(0)}) )\)\\
& \qquad
 -\eps^2q\(\frac{x^p}{p}\Psi^{(0)} \)
 \left(M\left[1-\frac{\Gamma}{2q}\lambda^2(T-t)\right]-
 \frac{\Phi_{20}}{2(1-p)  (\Psi^{(0)}) ^2}\right) + \eps^3  \(\Phi_{30}^{+} + M\Phi_{31}^{+}\) + \eps^4 M\( \Phi_{41}^{+} + M\Phi_{42}^{+}\),
   \label{eq:v-}
\end{align}
where 
the quantities $\Phi_{20}, \Phi_{30}^{+}, \Phi_{31}^{+}, \Phi_{41}^{+}, \Phi_{42}^{+}$ are given by
\begin{align}
\Phi_{20}&=
 (\Psi^{(0)})^2 \Bigg[
 \sum_{i=1}^2\beta_i^2\left(qp(\rho_i^{(1)})^2(\Psi^{(0)}_i)^2
 +4 q  p \rho  \rho_i^{(1)} \Psi^{(0)}_i  \Psi^{(1)}_i -(q -1) (p-1) (\Psi^{(1)}_i)^2\right)\\
 &\hskip 3cm +2 \beta_1 \beta_2
 \bigg(
 [q  p \rho  \sum_{i=1}^2\rho_i^{(1)}-(q -1) (p-1) \rho_{12}^{(1)}](\Psi^{(1)}_1  \Psi^{(0)}_2+\Psi^{(1)}_2  \Psi^{(0)}_1)\\
& \hskip 3cm\qquad\qquad -(q-1)(p-1)\Psi^{(1)}_1  \Psi^{(1)}_2+qp\rho_1^{(1)}\rho_2^{(1)}\Psi^{(0)}_1  \Psi^{(0)}_2\bigg)\Bigg]\\
 &+2  \Psi^{(0)}  \Psi^{(1)} \Bigg[
 \sum_{i=1}^2\beta_i^2\left(-qp\rho\rho_i^{(1)}(\Psi^{(0)}_i)^2+(q-1)(p-1)\Psi^{(0)}_i\Psi^{(1)}_i\right)\\
 &\hskip 1cm+\beta_1 \beta_2
 \left(
 [-q  p \rho  \sum_{i=1}^2\rho_i^{(1)}+(q -1) (p-1) \rho_{12}^{(1)}]\Psi^{(0)}_1\Psi^{(0)}_2
 +(q-1)(p-1)(\Psi^{(0)}_1\Psi^{(1)}_2+\Psi^{(0)}_2\Psi^{(1)}_1)
 \right)
 \Bigg]\\
&-2( \Psi^{(0)}) ^3 \left[\beta_1 \beta_2 (p-1)  \rho_{12}^{(1)}   \Psi^{(1)}_{12}-\lambda  p
\sum_{i=1}^2(\beta_i \rho_i^{(1)} \Psi^{(1)}_i)\right]
+(q -1) (1-p)  (\Psi^{(1)})^2 \left(\sum_{i=1}^2\beta_i \Psi^{(0)}_i\right)^2, \label{eq:Phi}
\end{align}
\begin{align}
\Phi_{30}^{+}&=\frac{ q^2 \Gamma \rho \Psi^{(1)}}{ \Psi^{(0)} }  \(  \Sumi \beta_i  \Psi^{(0)}_i\) \(  \Sumi \beta_i  \rho_{i}^{(1)}\Psi^{(1)}_i\) + q \Psi^{(1)}  \(   \Gamma \lam \(  \Sumi \beta_i  \rho_{i}^{(1)}\Psi^{(1)}_i\)+ \beta_1 \beta_2 \rho_{12}^{(1)}   \Psi^{(1)}_{12}  \),\\
\Phi_{31}^{+}&= q \(-\Psi^{(1)} +    \frac{\lam^2 \Gamma}{q}  (T-t) \Psi^{(1)} - \Gamma q \rho^2 \Psi^{(1)} (T-t)   \frac{ \left(\Sumi \beta_i  \Psi^{(0)}_i \right)^2}{ {(\Psi^{(0)})^2 } }\)\\
&+\frac{q-1}{ \Psi^{(0)} }(T-t) \( \(\Sumi \beta_i  \Psi^{(0)}_i\) \left(\Sumi \beta_i  \Psi^{(1)}_i\right)  + \rho_{12}^{(1)} \beta_1\beta_2 \Psi^{(0)}_1\Psi^{(0)}_2\),\\
\Phi_{41}^{+}&=-(T-t) q \(\Gamma \(\lam + \frac{q\rho}{  \Psi^{(0)} }\(  \Sumi \beta_i  \Psi^{(0)}_i\) \) \(  \Sumi \beta_i  \rho_{i}^{(1)}\Psi^{(1)}_i\)    +\beta_1 \beta_2 \rho_{12}^{(1)} \sigma  \Psi^{(1)}_{12} \)\\
\Phi_{42}^{+}&=-(T-t) q \(1 - \frac{\Gamma\lam^2}{2q} +(T-t)q\rho^2\(\frac{\Sumi \beta_i  \Psi^{(0)}_i}{ \Psi^{(0)}}   \)^2  \).
\end{align}

From the equations \eqref{eq:Psi0} and  \eqref{eq:Psi1} satisfied by $\Psi^{(0)}$ and $\Psi^{(1)}$    respectively, the terms of order one and of order $\eps$ in \eqref{eq:v-} cancel.
For $p<0$, we have $\Gamma<0$ and consequently  $\frac1p\left[\frac{\Gamma}{2q}\lambda^2(T-t)-1\right]>0$. Therefore, from the boundedness of $\Phi_{20},\Phi_{30}^{+}, \Phi_{31}^{+}, \Phi_{41}^{+}, \Phi_{42}^{+}$, one can choose $M>0$ independently of $\eps$ such that the term in $\eps^2$ in \eqref{eq:v-} is positive. 
Finally, since the $\eps^3$ and $\eps^4$ terms are all bounded, it follows that for $\eps>0$ small enough the estimate \eqref{Qestimates} for $Q^{\pi^0}[v^{+}]$ follows.

Note that for deriving \eqref{subestimate} from this estimate, one needs to check that the martingale parts are true martingales. This can be seen by writing these quantities explicitly and using again  the boundedness of  the derivatives of $\Psi^{(0)}$ and $\Psi^{(1)}$ and the admissibility of $\pi^0$. We omit the details.


%

%
%

\subsubsection{Sub-solution, computation of $\sup_{\pi}Q^{\pi}[v^{-}]$}\label{sec:supersol}
Using the fact that $v^{-}_{xx}<0$, a similar calculation with any admissible strategy $\pi$ reveals: 

\begin{align}
&\frac{Q^{\pi}[v^{-}]}{{\( \Psi^{(0)}  + \eps \Psi^{(1)} -\eps^2M(T-t) \)^{q-2} }}
 \leq \frac{\sup_\pi Q^{\pi}[v^{-}]}{{\( \Psi^{(0)}  + \eps \Psi^{(1)} -\eps^2M(T-t) \)^{q-2} }}
 \\
 =& \left(\d_t v^{-} + \L_{z_1, z_2}^{\rho_{12}} v^{-} -\frac{\(\lam v^{-}_x + \Sumi \rho_i\beta_i v^{-}_i\)^2}{2 v^{-}_{xx}}\right)\( \Psi^{(0)}  + \eps \Psi^{(1)} -\eps^2M(T-t) \)^{2-q} \\
&=
q\(\frac{x^p}{p}\( \Psi^{(0)}  + \eps \Psi^{(1)} -\eps^2M(T-t)\)\)
 \(\d_t \Psi^{(0)}  +\frac \Gamma{2q} \lam^2\Psi^{(0)}+ \L ^{1,\rho}_{z_1,z_2}\Psi^{(0)}\) \\
&+ \eps q\(\frac{x^p}{p}\( \Psi^{(0)}  + \eps \Psi^{(1)}-\eps^2M(T-t) \)\) \(\d_t\Psi^{(1)} + \L^{1,\rho}_{z_1,z_2}  \Psi^{(1)} + \frac{\Gamma}{2q}\lam^2 \Psi^{(1)}  -  f_1(\Psi^{(0)}, \nabla \Psi^{(0)}, \mathbb H(\Psi^{(0)}) )\)\\
& 
 +\eps^2q\(\frac{x^p}{p}\Psi^{(0)}  \)
  \left(M\left[1-\frac{\Gamma}{2q}\lambda^2(T-t)\right]+
 \frac{\Phi_{20}+p\Theta^2}{2(1-p)  (\Psi^{(0)}) ^2}\right) \\
 & + \eps^3 q \(\Phi_{30}^{-}+ M\Phi_{31}^{-}\) + \eps^4 qM(T-t) \( \Phi_{41}^{+} + M\Phi_{42}^{-}\) ,\label{eq:v+}
\end{align}
where $\Phi_{20}$ is given by \eqref{eq:Phi} and
\begin{align}\label{def:Theta}
\Theta^2&=q^2\rho^2
\left(
\Psi^{(1)}\sum_{i=1}^2\beta_i\Psi^{(0)}_i-\Psi^{(0)}\sum_{i=1}^2\beta_i\Psi^{(1)}_i
\right)^2,
\end{align}
\begin{align}
\Phi_{30}^{-}&= \Gamma    \Sumi \beta_i  \rho_{i}^{(1)} \Psi^{(1)}_i\(\lam \Psi^{(1)} + \rho q \beta_i  \Psi^{(1)}_i\)+ \beta_1\beta_2 \( \Psi^{(1)} \Psi^{(1)}_{12} \rho_{12}^{(1)} - \( \rho_{12}^{(1)}(q-1) - \Gamma (\rho_{i}^{(1)} + \rho_{i}^{(1)} )q\)\Psi^{(1)}_{1}\Psi^{(1)}_{2}
 \),\\
\Phi_{31}^{-}&= \(\frac{\Gamma\lambda^2}{2q}(T-t)-1\)-\frac{(T-t)}{\Psi^{(0)}}\(  \Gamma q\rho\Sumi\ \(\beta_i\Psi^{(1)}_{i}\)^2  - \beta_1\beta_2\( \rho_{12}^{(1)}(q-1) - \Gamma (\rho_{i}^{(1)} + \rho_{i}^{(1)} )q\)\Psi^{(1)}_{1}\Psi^{(1)}_{2}
 \),\\
\Phi_{41}^{-}&=    \Gamma\lam\Sumi \beta_i  \rho_{i}^{(1)}\Psi^{(1)}_i   +\beta_1 \beta_2 \rho_{12}^{(1)} \sigma  \Psi^{(1)}_{12}, \\
\Phi_{42}^{-}&=- \(1 - \frac{\Gamma\lam^2}{2q}(T-t)  \).
\end{align}

Using the boundedness of $\Phi_{20},\Theta, \Phi_{30}^{-}, \Phi_{31}^{-}, \Phi_{41}^{-}, \Phi_{42}^{-}$, and the fact that $p<0$ one can choose $M>0$ independently of $\eps$ such that the term in $\eps^2$ in \eqref{eq:v+} is negative,
and  the other $\O(\eps^3),\O(\eps^4) $ {terms are} absorbed for $\eps$ small enough.

We conclude that the  inequality \eqref{Qestimates} for $\sup_\pi Q^{\pi}[v^-]$ holds. The martingale terms in 
\eqref{superestimatepi} are handled as before, before taking the supremum in the admissible $\pi$.

 Finally, from \eqref{subestimate} and \eqref{superestimatepi}, we deduce
 \begin{align}
\left|\E_{t,x,z_1,z_2}\[ \frac{1}{p}\left(X_T^{(\pi^{0}) }\right)^p\]- \frac{x^{p}}{p} \( \Psi^{(0)} + \eps \Psi^{(1)} \)^q \right|=x^p\O(\eps^2),
\end{align}
uniformly in $(t,z_1,z_2)$. {Note that here, the $\O(\eps^2)$ term depends on $M$.}
The ``near-optimality" estimate \eqref{pi0optimal} for the strategy $\pi^0$ follows easily from \eqref{approx}.

 \end{proof}
 
 \begin{remark} 
 Note that, as it should be, the additional term of oder $\eps^2$ from \eqref{eq:v-} to \eqref{eq:v+}
 \begin{align}
 \eps^2q\left\{\frac{x^p}{p}\( \Psi^{(0)}  \)^{q-1}\right\}\frac{p\Theta^2}{2(1-p)  (\Psi^{(0)}) ^2}  
\end{align}
is positive as $p$ simplifies, $q>0$, and $1-p>0$.
\end{remark}

\subsubsection{The case $0<p<1$}\label{sec:pnegative}

The conclusion of Theorem \ref{TheTheorem} holds modulo the following adjustments.

The proof in the case $0<p<1$ needs a different definition of $v^\pm$ because in that case $\Gamma=\frac{p}{1-p} >0$ and, therefore, the quantity $\left[1-\frac{\Gamma}{2q}\lambda^2(T-t)\right]$ may change sign. We redefine them as 
\begin{align}
v^\pm (t,x, z_1, z_2) = \frac{x^{p}}{p} \( \Psi^{(0)} (t,z_1, z_2) + \eps \Psi^{(1)} (t,z_1, z_2)\pm \eps^2M(-t)\)^q ,
\label{eq:sub-super-pnegative}
\end{align}
so that $v^+<v^-$ since $p<0$. 
The inequalities \eqref{Qestimates} still hold as we have now replaced $-\left[1-\frac{\Gamma}{2q}\lambda^2(T-t)\right]$ by $\left[1-\frac{\Gamma}{2q}\lambda^2(-t)\right]$ and $\Gamma>0$.
Now, we need to pay attention at terminal values. 
\begin{align}
v^\pm(T,x,z_1,z_2)&=\frac{x^{p}}{p} \( 1\pm \eps^2M(-T)\)^q,
\end{align}
so that $v^+(T,x,z_1,z_2)<\frac{x^{p}}{p}$ and $v^-(T,x,z_1,z_2)>\frac{x^{p}}{p}$. Then, the first line of \eqref{subestimate}
is replaced by 
\begin{align}
v(t,x,z_1,z_2)\geq \E_{t,x,z_1,z_2}\left[\U(X^{(\pi^0)}(T))\right]&\geq\E_{t,x,z_1,z_2}\left[v^+(T,X^{(\pi^0)}(T),Z_1(T),Z_2(T))\right],
\end{align}
and the first line of \eqref{superestimatepi} is replaced by 
\begin{align}
 \E_{t,x,z_1,z_2}\left[\U(X^{(\pi)}(T))\right]&\leq \E_{t,x,z_1,z_2}\left[v^-(T,X^{(\pi)}(T),Z_1(T),Z_2(T))\right].
\end{align}
The rest of the proof follows the same lines as in the case $p<0$.

\section{An Example with Explicit Formula}\label{sec:explicit1}
In our approach, the solution of non-linear HJB equation \eqref{eq:HJBfull} is approximated by $\frac{x^p}{p}\left(\Psi^{(0)}+\eps\Psi^{(1)}\right)$ where $\Psi^{(0)}$ and $\Psi^{(1)}$ are the solutions of the linear equations \eqref{eq:Psi0} and \eqref{eq:Psi1} respectively. The advantage is that these two equations are linear and also are of lower dimension being independent of $x$. In this section, we provide an example with explicit formulas for the approximation which we use as benchmark in a numerical illustration presented in Section \ref{sec:numeric}.

We consider the following model 
\begin{align}
\mu(z_1, z_2) = \bar\mu, \quad \sigma(z_1, z_2) =  \frac{\bar\sigma}{\sqrt{\eta_1\abs{z_1-z_2}+\eta_2\abs{z_1+z_2}+1}},\quad\bar \lam = \frac{\bar \mu}{\bar \sig},\label{eq:ex1.1}\\
\beta_i(z_1, z_2)  = \bar\beta
\sqrt{\abs{z_1+z_2}},\quad \alpha_i(z_i) = (m_1+m_2) + m_i \sign{z_1-z_2}  + z_i -\beta_i\lam\rho_i\Gamma,~i=1,2,
\label{eq:ex1.2}
\end{align}
with $\sign{x} = \ind_{\{x>0\}} - \ind_{\{x<0\}},$ and where recall that $\lam(z_1, z_2) = \frac{\mu}{\sig}(z_1, z_2) = \bar \lam \sqrt{\eta_1\abs{z_1-z_2}+\eta_2\abs{z_1+z_2}+ 1},$ 
Assume also that $m_1> m_2\ge1$ and $\eta_1, \eta_2\ge0$, and $(z_1,z_2)\in\K$, where $\K \subset \{(x_1, x_2)\in\R^2 \vert x_1>x_2,~ x_1+x_2>0\}$ is a compact. Note the singularity when $Z_1 =Z_2$. When $\eps=0$ and the  Brownian motions are perfectly correlated this does not happen since in that case $Z_1(s) > Z_2(s),~t\le s\le T$, as $d( Z_1 -Z_2)(s) = \((m_1 - m_2) \sign{  ( Z_1 -Z_2)(s)} + ( Z_1 -Z_2)(s) \)ds,$ and the boundary $z_1=z_2$ is absorbing. 
Therefore, it is not surprising that when $\eps>0$ small enough we should still be able to ignore the possibility of crossing the boundary with high probability.

Consider the case, $p<0$, the other case $p>0$ is similar. 
Next, observe that $Z_1+Z_2$  satisfies 
\begin{align}
d(Z_1 + Z_2)(s) = \((m_1+m_2) (2+\sign{(Z_1-Z_2)(s))} + (Z_1 + Z_2)(s)\)ds + \bar\beta\sqrt{2(1+\rho_{12}) \abs{(Z_1 + Z_2)(s)} }d\bar B_s,
\end{align}
where $\bar B$ is a one-dimensional Brownian motion, defined by $d\bar B_s = d\frac{\bar B_1(s) + \bar B_2(s)}{\sqrt{2(1+\rho_{12} )} }$, where in turn the Brownian motions $\bar B_i(s)$ are defined similar to \eqref{eq:drift-change} as $d\bar B_i(s) =  d B_i(s) - \Gamma \lam\rho_i ds.$
Compare $Z_1+Z_2$ with the following two diffusions
\begin{align}
d \overline Z(s) =  \(3(m_1+m_2) + \overline Z(s)\)ds + \bar\beta\sqrt{2(1+\rho_{12}) \abs{\overline Z(s)} }d\bar B_s,~\overline Z(t) = z_1+z_2\\
d \underline Z(s) =  \(m_1+m_2 + \underline Z(s)\)ds + \bar\beta\sqrt{2(1+\rho_{12}) \abs{\underline Z(s)} }d\bar B_s,~\underline Z(t) = z_1+z_2.\label{eq:4Feller}
\end{align}
We have that $\underline Z, \overline Z$ are both CIR processes, and under our assumptions they both satisfy the Feller condition and therefore $\underline Z, \overline Z>0$ a.s.. Additionally, they also sandwich $Z_1 + Z_2$, i.e. $\underline Z\le Z_1 + Z_2\le \overline Z$. Therefore, $Z_1 + Z_2>0$ a.s., and all the absolute values of $Z_1+Z_2$ inside the square roots above, can be removed and written simply as $Z_1+Z_2.$ 
Next, note that a CIR model has a good rate function \cite{chiarini2014large} and therefore, for $\eps>0$ small enough, $\sqrt{\overline Z(s)} \le \frac1{\sqrt[4]{\eps}}$
for all $s\in[t,T]$ on a set $A_\eps$ with probability at least $1-\eps^7.$ 
Therefore, the same is also true for $\sqrt{(Z_1+Z_2)(s)} \le \frac1{\sqrt[4]{\eps}}$ there.

Next, we have that 
\begin{align}
d( Z_1 -Z_2)(s) = \((m_1 - m_2) \sign{  ( Z_1 -Z_2)(s)} + ( Z_1 -Z_2)(s) \)ds + \bar\beta\sqrt{2(1-\rho_{12}) {(Z_1 + Z_2)(s)} }d\hat B_s ,
\label{eq:4Feller1}
\end{align}
where $\hat B$ is another one-dimensional Brownian motion. 
Recall that $z_1-z_2>0$. Using the fact that $1-\rho_{12}=O(\eps)$, we have that $ \bar\beta\sqrt{2(1-\rho_{12} ){(Z_1 + Z_2)(s)} } \le O(\sqrt[4]{\eps})$ on $A_\eps$, then, for $\eps>0$ small enough, we can further assume that on a set $B_\eps$ with probability at least $1-\eps^6$ the {process} $Z_1(s)-Z_2(s)\ge0$ on the entire $[t,T].$ 

Indeed, observe that on $[t,\tau\wedge T]$, where $\tau = \inf\{s>t \colon (Z_1-Z_2)(s) \le0\},$ we have that $(Z_1-Z_2)(s) =  (m_1-m_2) (\e{s-t}-1)  +\e{s} \( (z_1-z_2) \e{-t} + \int _t^s \e{-u} \bar\beta\sqrt{2(1-\rho_{12}) {{(Z_1 + Z_2)(u)} }}d\hat B_u\), ~t\le s \le \tau\wedge T.$
Note that $\check B_s = \int _t^{(\hat S)^{-1}(s)} \e{-u} \bar\beta\sqrt{2(1-\rho_{12}) {{(Z_1 + Z_2)(u)}} }d\hat B_u,$
is a Brownian motion on $[0,\hat S(T)]$, where $\hat S(u) = 2\int_t^{ u} \e{-2 \xi } \bar\beta^2 {(1-\rho_{12}) {{(Z_1 + Z_2)(\xi)}} }d\xi,$ and $(\hat S)^{-1}$ is its inverse. Let $\check\tau = \inf\{s>0 \colon \check B_s \le z_2 -z_1\},$ and recall that $m_1-m_2>0.$   
Therefore using the fact that $\hat S(T) = O\(\sqrt{\eps}\)$, on $A_\eps$
it can be then calculated that for $\eps>0$ small enough, $\P(\{\tau <T\}\cap A_\eps) \le \P(\{\check \tau< \hat S(T)\}\cap A_\eps) \le \eps^7. $

On $B_\eps$, we can get rid all the absolute values in \eqref{eq:ex1.1}-\eqref{eq:ex1.2}.
Thus, as opposed to finding the true PDE solution $\Psi$ in \eqref{eq:HJB-non-lin}\, we will instead proceed to find  the solution to the approximate PDE
\begin{align}
&  \d_t \tilde \Psi + {\bar\beta^2}\frac{z_1+z_2}{2}\(\tilde \Psi_{11} + 2\rho_{12} \Psi_{12} + \Psi_{22}  \) +\frac {\Gamma\bar\lam^2}{2q} (\eta_1(z_1-z_2)+\eta_2(z_1+z_2)+ 1 )\tilde  \Psi + \Sumi \(m_1+m_2+m_i+z_i\) \tilde\Psi_i  \\
&+ {\bar\beta^2} (z_1+z_2) \(\frac12 \Sumi \( (q-1)+q \rho_i^2\Gamma\) \frac{ \(\tilde\Psi_i \)^2}{\tilde\Psi } + \(\Gamma q \rho_1\rho_2 + (q-1)\rho_{12}\) \frac{ \tilde\Psi_1 \tilde\Psi_{2}  }{\tilde \Psi }\)=0, \mbox{ for } 0\le t<T, \label{eq:HJB-mod}\\
&0<z_1+z_2<\eps^{-1/4},~ 0<z_1-z_2,\\
&\tilde \Psi (T, z_1, z_2) = 1,~\tilde \Psi (t, z_1, z_2) =0, \mbox{ on } z_1-z_2 = 0,~z_1 +z_2 = \eps^{-1/4}.\label{eq:bnd-mod}
\end{align}
More specifically, consider the problem 
\begin{align}\label{eq:v-mod}
\tilde v(t,x,z_1, z_2) = \sup_{\pi} \E_{t,x,z_1, z_2}\[\U(X_T) \ind_{\{ Z_1-Z_2 >0,~ Z_1 +Z_2 \le \eps^{-\frac14} \}}\].
\end{align}
Here we need to tweak the definition of admissibility and additionally require that  $\E\[   \U^{2} (X_T^{\pi})\]<\infty$ in order for strategy $\pi$ to be admissible. Below we will show that $\pi^0$ is admissible.
Then
$v \le \tilde v\le 0,$ 
for $(z_1, z_2)\in\K$, we have that 
\begin{align}
\abs{v - \tilde v} (t,1, z_1, z_2) &\le  \E_{t,1,z_1, z_2}\[\abs{\U(X_T^{\pi^*})} \ind_{\{ Z_1-Z_2 < 0\} \cup \{ Z_1 +Z_2 > \eps^{-\frac14} \}}\] 
\le \sqrt{ \E_{t,1,z_1, z_2}\[\U^{2}(X_T^{\pi^*})\] \P(B_\eps^c)}
\le O(\eps^3).
\end{align}
Therefore 
\begin{align}
\abs{\Psi - \tilde \Psi} (t,z_1, z_2)\le  O(\eps^3),~(z_1, z_2)\in\K.\label{eq:Psi-tildePsi}
\end{align}

The modified problem \eqref{eq:v-mod} leads to the HJB equation \eqref{eq:HJB-mod} with the boundary conditions \eqref{eq:bnd-mod} with standard transformation. Therefore, we proceed to solve \eqref{eq:HJB-mod}, \eqref{eq:bnd-mod}. By the above discussion, we can also ignore the boundary conditions. Performing the asymptotic expansion leads to the same equations \eqref{eq:Psi0} and \eqref{eq:Psi1}.  
Therefore, for convenience, we will drop the tilde, and still call the $\eps$-approximations to $\tilde \Psi$ by $\Psi^{(0)},\Psi^{(1)}$ and the zero order associated strategy $\pi^{0}$.

Therefore, \eqref{eq:Psi0}, the equation satisfied by $\Psi^{(0)}$ becomes:
\begin{align}
&\d_t \Psi^{(0)}  + \sum_{i=1}^2 (m_1+m_2 + m_i  +z_i)  \Psi^{(0)}_i + {\bar\beta^2}\frac{z_1+z_2}{2} \( \Psi^{(0)}_{11} + 2 \Psi^{(0)}_{12} + \Psi^{(0)}_{22} \)  \label{eq:Psi0-ex}\\
&\quad+\frac \Gamma{2q} \bar\lam^2(1+(\eta_1+\eta_2)z_1-(\eta_1-\eta_2) z_2) \Psi^{(0)}
=0,\\
&\Psi^{(0)}(T, z_1, z_2)=1.
\end{align}
We use the standard ansatz $\Psi^{(0)} = \e{A(t) + \bar B_1(t) z_1 + \bar B_2(t)z_2} .$ Then
\begin{align}
\bar B_1'(t)&=\frac{\bar\lam^2\Gamma}{ 2q}(\eta_1+\eta_2) + \bar\beta^2 \frac12 (\bar B_1(t) + \bar B_2(t))^2 + \bar B_1(t),~\bar B_1(T)=0,\\
\bar B_2'(t)&=-\frac{\bar\lam^2\Gamma}{2 q} (\eta_1-\eta_2)+ \bar\beta^2\frac12 (\bar B_1(t) + \bar B_2(t))^2 + \bar B_2(t),~\bar B_2(T)=0,\\
A'(t) & = (2m_1+m_2) \bar B_1(t) + (m_1+2m_2) \bar B_2(t)+\frac{\bar\lam^2\Gamma}{2 q},~A(T)=0.
\end{align}
Letting $\hat B_1(t) = \bar B_1(t) + \bar B_2(t),~\hat B_2(t) = \bar B_1(t) -\bar B_2(t),$ then
\begin{align}
\hat B_1'(t)&= \bar\beta^2\hat B_1^2(t) + \hat B_1(t) +\frac{\bar\lam^2\Gamma}{ q}\eta_2 ,~\hat B_1(T) =0,\\
\hat B_2'(t)&= \hat B_2(t) + \frac{\bar\lam^2\Gamma}{ q}\eta_1~\hat B_2(T) =0.
\end{align}
The solution is given by $\hat B_2(t) =  \eta_1\frac{\bar\lam^2\Gamma}{ q} \(\e{- (T-t)} -1\), $ and
$\hat B_1(t) = a_{+}a_{-} \frac{1-\e{\bar\beta^2(T-t)(a_{+}-a_{-})  } }{a_{+}-a_{-}\e{\bar\beta^2(T-t)(a_{+}-a_{-}) }   }$, where $a_{\pm}=\frac{-1\pm \sqrt{1 - \bar\beta^2 \frac{4\bar\lam^2\Gamma \eta_2}{ q} }}{2{\bar\beta^2}},$ are assumed to be two distinct real roots of the quadratic $\bar\beta^2 a^2 +a +  \frac{\bar\lam^2\Gamma}{ q}\eta_2= 0.$ The latter is achieved, for example, if $p<0.$ 

Therefore, 
\begin{align}
\bar B_1(t) &= \frac{a_{+}a_{-}}{2} \frac{1-\e{\bar\beta^2(T-t)(a_{+}-a_{-})  } }{a_{+}-a_{-}\e{\bar\beta^2(T-t)(a_{+}-a_{-}) } }  +  \frac{\bar\lam^2\Gamma\eta_1}{ 2q} \(\e{- (T-t)} -1\),\\
\bar B_2(t) &= \frac{a_{+}a_{-}}{2} \frac{1-\e{\bar\beta^2(T-t)(a_{+}-a_{-})  } }{a_{+}-a_{-}\e{\bar\beta^2(T-t)(a_{+}-a_{-}) } }  -  \frac{\bar\lam^2\Gamma\eta_1}{ 2q} \(\e{- (T-t)} -1\),
\end{align}
and 
\begin{align}
A(t) &=\frac32 (m_1+m_2)  \((t-T) a_{-}-\frac1{\bar\beta^2}\log \frac{a_{-}\e{\beta^2(t-T) (a_{-}-a_{+}) } -a_{+}}{a_{-}-a_{+}} \) +  \frac{\bar\lam^2\Gamma\eta_1}{2q}(m_1-m_2) (1-\e{-(T-t)}) \\
&+ \frac{\bar\lam^2\Gamma}{ 2q}(1-m_1-m_2)(T-t).
\end{align}
Additionally, from \eqref{eq:pi0} the strategy $\pi^{0}$ is given by:
\begin{align}
\pi^{0} =\frac{x}{(1-p)\sigma(z_1, z_2)}\left( \lambda(z_1, z_2) + \rho q\, \Sumi \beta_i(z_1, z_2) \bar B_i(t)  \right).
\label{eq:pi0-ex}
\end{align}
We note that $\pi^{0}$ is an admissible strategy. Let $\bar \pi_0 =\frac{\pi^0}{x} =\frac{1}{(1-p)\sigma(z_1, z_2)}\left( \lambda(z_1, z_2) + \rho q\, \Sumi \beta_i(z_1, z_2) \bar B_i(t)  \right).$  From the fact that 
$\abs{ \bar B_i(t)} \le \frac{\bar\lam^2\abs{\Gamma}(\eta_1+\eta_2)}{ 2q} \(1-\e{- (T-t)} \) +O(\bar\beta^2),$ we conclude that
$\abs{\pi^0}$ is bounded by $\frac{\bar\lam}{\bar\sig(1-p)} (\nu_1\abs{z_1-z_2} + \nu_2(z_1+z_2) +1 + O(\bar\beta))x$.  Since $X_T = x\e{\int_t^T \mu \bar \pi^0_s - \frac{\sig^2(Z_1(s), Z_2(s)) (\pi^0_s)^2}{2} dt + \int_t^T \sig(Z_1(s), Z_2(s))\pi^0_s dW_s  } ,$ we get that
\begin{align}
&\E\[\U^2\( X_T\)\]  \\
&\le C\U^2(x)\sqrt[3]{\E\[\U^6\( \e{ \int_t^T (6\bar\mu + 18\bar\sig^2)\frac{\bar\lam}{\bar\sig(1-p)} \nu_1  \abs{Z_1-Z_2}(s) ds} \)\] } \sqrt[3]{\E\[\U^6\( \e{ \int_0^T (6\bar\mu + 18\bar\sig^2)\frac{\bar\lam}{\bar\sig(1-p)} \nu_2 (Z_1+Z_2)(s) ds} \)\]} . 
\end{align}
A technical, but simple calculation via affine ansatz solution then shows that 
for $\bar\beta>0$ small enough, such that
$1-8\bar\beta^2  (6\bar\mu + 18\bar\sig^2)>0,$ we have that
$\E\[\U^6\( \e{ \int_0^T (6\bar\mu + 18\bar\sig^2)\frac{\bar\lam}{\bar\sig(1-p)} \nu_2 (Z_1+Z_2)(s) ds} \)\] <\infty$ is finite.  Similar calculation can be done with other term involving $\abs{Z_1(s) -Z_2(s)}$ by utilizing the fact that 
$\abs{Z_1(s) - Z_2(s)} \le \check Z(s),$
where
\begin{align}
d \check Z(s) =  \(m_1-m_2 + \check Z(s)\)ds + \bar\beta\sqrt{2(1-\rho_{12}) \abs{\check Z(s)} }d\bar B_s,~\check Z(t) = z_1+z_2.
\end{align}
%
%
%
%

We want to highlight, that this is really the strategy $\tilde \pi^{0}$, i.e. the ``nearly-optimal" strategy associated with the zero order expansion of $\tilde \Psi$, but since the difference between $\tilde \Psi$ and $\Psi$ is small, this strategy will also achieve the desired accuracy level of $O(\eps^2).$ Additionally, while it is possible to repeat this entire calculation in the other two cases, when $z_1=z_2$ and $z_1<z_2$, and find the solution and the zero order strategy, this is not necessary in order to find a ``nearly-optimal" strategy. We start with $z_1>z_2,~z_1+z_2>0$, and then choose $\eps>0$ small enough, such that $(z_1, z_2)\in B_\eps$. While the process $(Z_1(t), Z_2(t))$ will leave the set $B_\eps$ with strictly positive probability, since this probability is very small, as explained above, this can be ignored. In other words, we can employ the strategy $\tilde{\tilde \pi}^{0} = \frac{x}{(1-p)\sigma(z_1, z_2)}\left( \lambda(z_1, z_2) + \rho q\, \Sumi \beta_i(z_1, z_2) \bar B_i(t)  \right) \ind_{\{ z_1-z_2 > 0\} \cup \{ z_1 +z_2 < \eps^{-1/4} \}}$, and it will still be ``nearly-optimal", as implied by \eqref{eq:Psi-tildePsi}.

Moreover, in the case $\eta_2=0,$ we can also find the $O(\eps)$  term.
Indeed, the PDE satisfied by $\Psi^{(1)}$ is
\begin{align}
&\d_t \Psi^{(1)}  + \sum_{i=1}^2 (m_1+m_2 + m_i  +z_i)  \Psi^{(1)}_i + \bar\beta^2\frac{z_1+z_2}{2} \( \Psi^{(1)}_{11} + 2 \Psi^{(1)}_{12} + \Psi^{(1)}_{22} \)  +\frac \Gamma{2q} \bar\lam^2(1+\eta_1(z_1- z_2)) \Psi^{(1)}\label{eq:Psi1-ex}\\
&\quad=\hat f_1(t,t,z_1- z_2, z_1+z_2),\\
&\Psi^{(1)}(T, z_1, z_2)=0.
\end{align}
where
\begin{align}
\hat f_1 (t,s,x,y) &= \Psi^{(0)}  \(  y \bar\beta^2\bar B_2(s)\bar B_1(s)\( \rho  q\Gamma (\tr_{1}-\rho  \tr_{12}+\tr_{2}) +\tr_{12} \)   + y  {q\Gamma \rho}\bar\beta^2 \Sumi  \bar B_i^2(s) \tr_{i}    \right.\\
&\left.\qquad+\bar\beta\Gamma\bar \lam\sqrt{y\( 1+ \eta_1 \(m_2-m_1 + \e{ s-t} ( x + m_1-m_2)\) \) } \Sumi \tr_i \bar B_i(s) \).
\end{align}
From the Feynman--Kac representation \eqref{eq:Psi1-E}  for $\Psi^{(1)}$,  we have that
\begin{align}
\Psi^{(1)}(t, z_1, z_2) 
&= \int_t^T \Exp{ \frac{\Gamma\bar\lam^2}{2q}   \(  \eta_1(m_1-m_2 + z_1 -z_2) \( \e{s-t}  -1\)+(1+ \eta_1 (m_1-m_2)) (s-t)\) } \\
&\quad\times\int_\R \hat f_1\(t,s, z_1-z_2,(\e{s-t} -1)  y\) \psi(y)dy ds .\label{eq:Psi1-ex-FK}
%
\end{align}
where we have used that $(Z_1 +Z_2)(s)$ evolves as a CIR process under $\tilde B$, the Brownian motion given by \eqref{eq:drift-change}:
$d(Z_1 + Z_2)(s) = 2(m_1+m_2) + (Z_1 + Z_2)(s) + 2 \bar\beta\sqrt{(Z_1 + Z_2)(s)}d\tilde B_s,$ and $(\e{s-t} -1) (Z_1 + Z_2)(s),~s\ge t$ has the p.d.f. 
\begin{align}
\psi(y) &= \frac1{2\(\e{s-t} -1\)} \e{-\frac1{2(\e{s-t} -1)} \((z_1+z_2) \e{s-t} + y\)   } \(\frac{y}{(z_1+z_2) \e{s-t}} \)^{(m_1+m_2)/2} \\
&\times I_{m_1+m_2}\(2\frac1{2(\e{s-t} -1)}  \sqrt{(z_1+z_2) \e{s-t}  y} \),
\end{align}
where $I_{m_1+m_2} (\cdot)$ is the modified Bessel function of the first kind of order $(m_1+m_2)$.
%
%

We note, that this calculation can also be done for the other two cases $z_1<z_2$ and $z_1=z_2$. 
\begin{remark}
 The model used in this example is based on square-root processes and does not satisfy the assumptions of Theorem \ref{TheTheorem}. Extending the accuracy result to that case requires another stopping argument at the first time one of the two processes $Z_1+Z_2$ or $Z_1-Z_2$  exit the interval $[\delta,\delta^{-1}]$ 
for some small parameter $\delta>0$. The stopped model satisfies the assumption but doesn't anymore allow  for explicit formulas for the functions $\Psi^{(0}$ and $\Psi^{(1)}$. A careful argument is needed to pass to the limit $\delta\to 0$ uniformly in $\eps$. This was done, for instance, for another nonlinear perturbation problem in \cite{FouqueNing} in the context of stochastic volatility uncertainty. It is quite technical and beyond the scope of this paper.
 \end{remark}

\subsection{Numerical Illustration}\label{sec:numeric}
We  illustrate our finding in the previous Section \ref{sec:explicit1} numerically.
We use the parameters:
\begin{align}
&\bar \beta = 0.3,~\bar \mu = 0.05,~ m_1 = 2,~ m_2 = 1,~ \bar \sig = 0.2,~  \bar\lam = \frac{\bar \mu}{\bar \sig} = \frac{0.05}{0.2},~ \rho_{12} = 1-\eps,~ \rho_1 = 0.5+\eps,~ \rho_2 =0.5 + \eps,\\ 
&\eta_1=1,~ \eta_2=0,~ p=-1,~ T=1,~ \eps=0.1.
\label{eq:param}
\end{align}
The graphs are all drawn as functions of $(0,z_1,z_2),~z_1>z_2\ge0,~ z_1+z_2, z_1-z_2\ge0, $ at the point $t=0.$ In this case it is easily seen that the Feller condition for the diffusions $\underline Z, \overline Z$ in \eqref{eq:4Feller} is satisfied. 
Figure \ref{fig1} illustrates: $\Psi - \Psi^{(0)}$ -- the difference between the numerical solution of $\Psi$, and its $O(1)$ approximation $\Psi^{(0)}$ (top left);
$\Psi - \Psi^{(\pi^0)}$ -- the difference between the numerical solution of $\Psi$, and the numerical solution of the non-linear HJB equation using the strategy  $\pi^0$ from \eqref{eq:pi0-ex}  (top right); 
and $\Psi -\( \Psi^{(0)} + \eps \Psi^{(1)}\)$ -- the difference between $\Psi$ and its $O(\eps)$ approximation $\Psi^{(0)} + \eps \Psi^{(1)}$(bottom).
%
First, note that we expect from Theorem \ref{TheTheorem} and the approximation \eqref{eq:Psi-tildePsi} that these errors are of order $\O(\eps)$, $\O(\eps^2)$ and  $\O(\eps^2)$ respectively. 
Next, it can be computed that $\hat f_1>0$ for $z_1+z_2>0$. Since we approximately have that $\Psi - \Psi^{(0)} \approx \eps \Psi^{(1)}. $ Therefore, we observe as expected that $ \Psi - \Psi^{(0)}$ is positive. Finally, since $v- \E_{t,x,z_1, z_2}\[\U\(X_T^{\pi^0}\)\]>0$, and $p<0$, it follows that $\Psi - \Psi^{(\pi^0)}<0$, which is again consistent with the sign observed in Figure \ref{fig1}. We want to emphasize that as expected this graph shows that if the simple zero order approximating strategy $\pi^{0}$ is used the difference in utility is of order $O(\eps^2)$, and thus as expected this is a "nearly-optimal" strategy.

\begin{figure}[ht!]
\begin{center}
\includegraphics[scale=0.6]{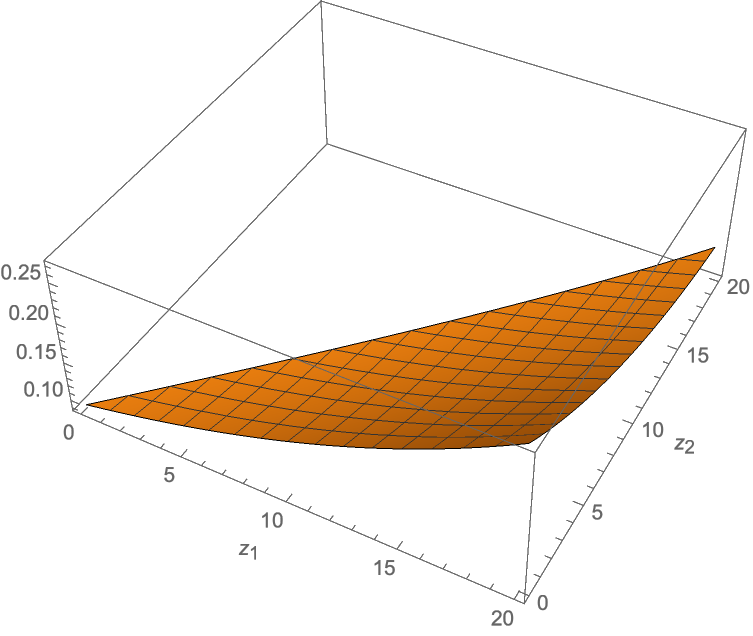}
\hspace{20pt}
\includegraphics[scale=0.6]{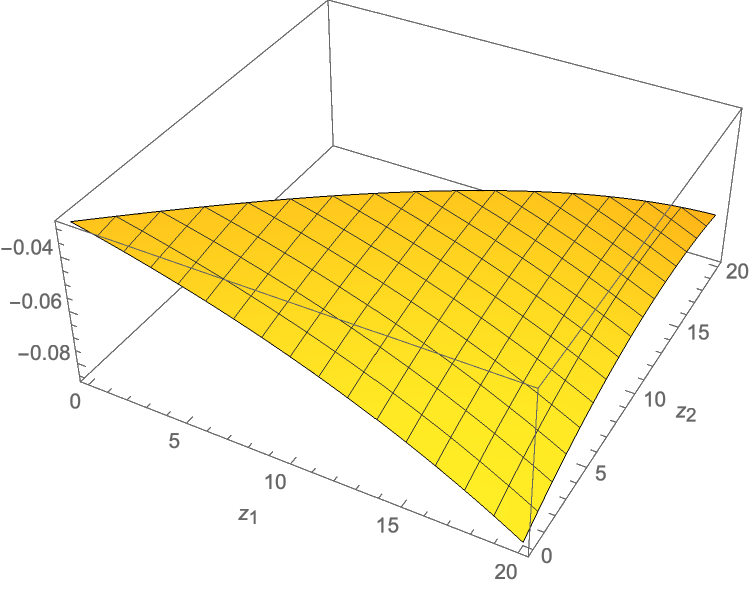}
\includegraphics[scale=0.6]{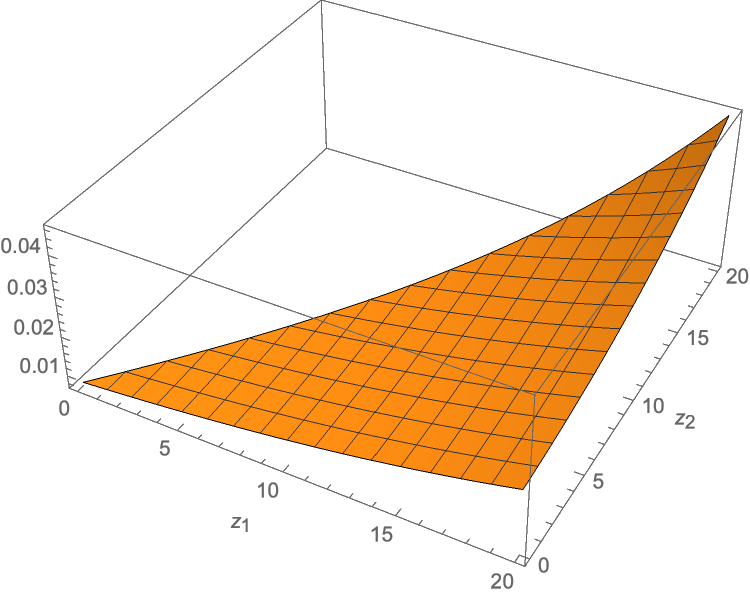}
\end{center}

\caption{Top left: graph of $\Psi - \Psi^{(0)}$ -- the difference  between the numerical solution $\Psi$, and its $O(1)$ approximation $\Psi^{(0)}$. 
Top right:  the graph of $\Psi - \Psi^{(\pi^0)}$ -- the difference between the numerical solution of $\Psi$, and the numerical solution of the HJB equation but using the strategy $\pi^0$.
Bottom: the graph of $\Psi -\( \Psi^{(0)} + \eps \Psi^{(1)}\)$ -- the difference between the numerical solution $\Psi$, and its $O(\eps)$ approximation $ \Psi^{(0)} + \eps \Psi^{(1)}.$ 
All graphs are done as a function of $(z_1,z_2),~z_1>z_2\ge0,~ z_1+z_2, z_1-z_2\ge0, $ at the point $t=0.$  }
\label{fig1} 
\end{figure}

\section{Extension to Models with Multi Assets}
We now show how to extend our results to a model with multi-assets, and multi-factors. Consider a model with multiple assets governed by
\begin{align}
\frac{dS_i(t)}{S_i(t)} &= \mu_i(\mathbf{Z}(t))dt + \sigma_i(\mathbf{Z}(t))\,dW_i(t),~i=1,2,\cdots,n,  \label{eq:S-multiD} \\
dZ_k(t) &=	 \alpha_k(Z_k(t))\,dt +\beta_k( Z_k(t))\,dB_k(t), ~k=1,2,\cdots,m, \label{eq:Z-multiD} 
\end{align}
where we use the vector notation $\mathbf{Z} := (Z_1,  \ldots, Z_m)^T$ and the
correlation structure between the Brownian motions $(W_1,\cdots,W_n,B_1,\cdots,B_m)$ is given by:
 $$
 d\<W_i, W_j\>_t =\rho^W_{ij} dt, \quad d\<B_k, B_l\>_t =\rho^B_{kl} dt,\quad d\<W_i, B_k\>_t =\rho_{ik}dt,
 \quad 1\leq i,j\leq n,\, 1\leq k,l\leq m,
 $$
 with parameters $(\rho^W_{ij},\rho^B_{kl},\rho_{ik} )$ ensuring a proper correlation structure (in particular $\rho^W_{ii}=\rho^B_{kk}=1$ and symmetries $\rho^W_{ij}=\rho^W_{ji}$, $\rho^B_{kl}=\rho^B_{lk}$).
 
 Assuming that the wealth is fully invested in the $n$ stocks in a self-financed way, then  the wealth process is given by:
\begin{align}
d X(t) = \sum_{i=1}^n\pi_i(t) \frac{dS_i(t)}{S_i(t)} =  \sum_{i=1}^n\pi_i(t) \left[ \mu_i(\mathbf{Z}(t))dt + \sigma_i(\mathbf{Z}(t))\,dW_i(t)\right],
\label{eq:wealth-multi}
\end{align}
where $\pi_i(t)$ is the amount invested in asset $i$  at time $t$.
The value function of the optimal investment problem with terminal time $T$ and utility $\U$ is:
\begin{align}
v(t,x,\mathbf{z}) = \sup_{\mathbf{\pi}} \E_{t,x,\mathbf{z}}\[\U(X_T)\],\quad \U(x) = \frac{x^{p}}{p},~p<1, p\ne0.
\end{align}
We define the following operators:
\begin{align}
\L^{\pi,\rho^W,\rho}_{x,\mathbf{z}} &= \sum_{i=1}^n\mu_i(\mathbf{z})\pi_i \frac{\d}{\d x}+ \frac12\( \sum_{i,j=1}^n \rho^W_{ij}\pi_i\pi_j \sig_i(\mathbf{z})\sig_j(\mathbf{z})   \)\frac{\d^2}{\d x^2} +  \sum_{i=1,k=1}^{n,m}\rho_{ik}\pi_i\sig_i(\mathbf{z})\beta_k(z_k) \frac{\d^2}{\d x \d z_k},\\
\L^{\rho^B}_{\mathbf{z}} &=\sum_{k=1}^m \alpha_k(z_k) \frac{\d}{\d z_k} 
+\frac12\( \sum_{k,l=1}^m\rho^B_{kl}\beta_k(z_k)\beta_l(z_l)
 \frac{\d^2}{\d z_k \d z_l}\).
\end{align}
The value function $v$ then satisfies:
\begin{align}
&\d_t v + \L^{\rho^B}_{\mathbf{z}} v + \sup_{\pi} \L^{\pi,\rho^W,\rho}_{x,\mathbf{z}} v =0,\label{eq:HJB-multi}\\
& v(T, x, \mathbf{z}) = \U(x).
\end{align}

Our asymptotics will be around the case where the  Brownian motions $B_k$ are fully correlated. In order to model this regime, we define:
\begin{align}\label{rhoeps}
\rho_{ik} = \rho_i +  \rho_{ik}^{(1)}\eps, ~1\leq i\leq n,\, 1\leq k\leq m,\quad \mbox{ and }\quad \rho_{kl}^B = 1 +   \trB_{kl}\eps,
\end{align}
with $\trB_{kk}=0$ and $\trB_{kl}<0$, and
$\eps$ is a small parameter,  $0<\eps\ll 1$, small enough to ensure a proper covariance structure.

\begin{remark}
The model that we are perturbing corresponding to $\eps=0$ in \eqref{rhoeps}, cannot be of {\it eigenvalue equality (EVE)} type as considered in \cite{EVE2018} unless $m=1$, that is models with a single factor. Indeed, the matrix $\rho\rho^T$ with $\rho_{ik} = \rho_i , ~1\leq i\leq n,\, 1\leq k\leq m$, admits zero as eigenvalue as soon as $m\geq 2$ and therefore, cannot satisfy the EVE condition $\rho\rho^T=cI$ unless in the uncorrelated case $\rho=0$.
\end{remark}

In order to keep the formulas as explicit as possible, we present the case with two assets and two factors.

\subsection{Model with Two Assets}\label{sec:twoassets}
We continue illustrate the calculation of the expansions in an example with two assets and two driving factors. Therefore the model will now be governed by \eqref{eq:S-multiD}--\eqref{eq:Z-multiD} with $n=m=2$.
Maximization over $\pi$ in \eqref{eq:HJB-multi} gives:
\begin{align}
\pi^{*}_i = &\frac{\sig_j \(\sig_i \Sumk \beta_k\rho_{ik}v_{xk} +\mu_i v_x\)-\rho^{W}_{12} \sig_i \(\sig_j \Sumk \beta_k\rho_{jk}v_{xk} +\mu_jv_x\)}{\left((\rho^{W}_{12})^2-1\right) \sig_i^2 \sig_j v_{xx}}, i,j=1,2, ~i\ne j,
\label{eq:pi-multi}
\end{align}
where $v_{k}$ denotes a derivative with respect to $z_k,~k=1,2$, and $(\rho^{W}_{12})^2<1$ to ensure that the two stocks are not fully correlated.
Substituting \eqref{eq:pi-multi} into \eqref{eq:HJB-multi}, it follows that 
\begin{align}
&\d_t v + \L^{\rho^B_{12}}_{\z} v \label{eq:HJB-multi1}\\
&-\frac{ \Sumi \beta_i^2 \left(-2 \rho_{2i} \rho_{1i} \rho^{W}_{12}+\rho_{1i}^2+\rho_{2i}^2\right) v_{x, i}^2+2 \beta_1 \beta_2 \left(\rho_{21} \left(\rho_{22}-\rho_{12} \rho^{W}_{12}\right)+\rho_{11} \left(\rho_{12}-\rho_{22} \rho^{W}_{12}\right)\right) v_{x,1} v_{x,2}}{2(1-(\rho^{W}_{12})^2) v_{xx}}\\
&-\left(\Sumi \beta_i v_{x,i} \left(\rho_{1i} (\lam_1 -\lam_2 \rho^{W}_{12} )+\rho_{2i} (\lam_2 -\lam_1 \rho^{W}_{12} )\right)\right)\frac{ v_x }{(1-(\rho^{W}_{12})^2) v_{xx}}-\frac{\left(\lam_1^2 -2 \lam_1 \lam_2 \rho^{W}_{12} +\lam_2^2 \right) v_x^2}{2(1-(\rho^{W}_{12})^2) v_{xx}}=0,
\end{align}
where $\lam_i(z_1, z_2) = \frac{\mu_i(z_1, z_2)}{\sigma_i(z_1, z_2)},~i=1,2.$

For $q\in\R$, we again perform a distortion transformation \eqref{eq:distortion} of the HJB equation \eqref{eq:HJB-multi1} for the value  function $v$. Similar to \eqref{eq:HJB-non-lin}, $\Psi$ must satisfy:
%
%
\begin{align}
&\d_t\Psi+\frac12\(\beta_1^2 \Psi_{11} + 2\rho^B_{12} \beta_1\beta_2 \Psi_{12} + \beta_2^2 \Psi_{22}\) +  \frac{\Gamma \left(\lam_1^2 -2 \lam_1 \lam_2 \rho^W_{12} +\lam_2^2\right)}{2q(1-(\rho^W_{12})^2)} \Psi\\
&+\Sumi \(\alpha_i +  \frac{\Gamma\beta_i}{1-(\rho^W_{12})^2} \left(\rho_{1i} (\lam_1 -\lam_2 \rho^{W}_{12} )+\rho_{2i} (\lam_2 -\lam_1 \rho^{W}_{12} )\right)  \)\Psi_i\\
%
&+\frac12\sum_{i,j=1,2,~ j\ne i}  \beta_i^2 \left( q-1  +q\frac{\Gamma}{1-(\rho^W_{12})^2}  \left(\rho _{ii}^2-2 \rho _{ji} \rho _{ii} \rho^W_{12}+\rho _{ji}^2\right)\right) \frac{(\Psi_i)^2}{\Psi}\\
&+ \beta_1\beta_2\( q \frac{\Gamma}{1-(\rho^W_{12})^2} \(\rho_{22}\rho_{21}+\rho_{11}\rho_{12} - \rho_{12}^W(\rho_{12}\rho_{21}+\rho_{22}\rho_{11}) \) +(q-1)\rho^B_{12}\)\frac{\Psi_1\Psi_2}{\Psi}=0.
%
\end{align}

\subsubsection{Fully Correlated Case}
Analogous to Section \ref{sec:full-corr}, we temporarily assume  that  the two stochastic volatility factors are fully correlated:
 $\rho_{12}^B=1$, $B_1(t) = B_2(t) = B(t)$, and $d\<W_i, B\>_t =\rho_{i} dt,~i,j=1,2.$

 We consider the ansatz
\begin{align}
v(t,x, z_1, z_2) = \frac{x^{p}}{p} \(\Psi^{(0)}(t,z_1, z_2)\)^q.
\end{align}
Let $\Gamma = \frac{p}{1-p}.$ Then, it follows that $\Psi^{(0)}$ satisfies

\begin{align}
&\d_t\Psi^{(0)}+\frac12\(\beta_1^2 \Psi^{(0)}_{11} +  \beta_1\beta_2 \Psi^{(0)}_{12} +\frac12 \beta_2^2 \Psi^{(0)}_{22}\) +  \(\alpha_1 +  \frac{\Gamma\beta_1}{1-(\rho^W_{12})^2}\(\lam_1\(\rho_1 - \rho_2 \rho^W_{12}\) + \lam_2\(\rho_2 - \rho_1 \rho^W_{12}\)   \)   \)\Psi^{(0)}_1\\
&+ \(\alpha_2 +  \frac{\Gamma\beta_2 }{1-(\rho^W_{12})^2}\(\lam_1\(\rho_1 - \rho_2 \rho^W_{12}\) + \lam_2\(\rho_2 - \rho_1 \rho^W_{12}\)  \)  \)\Psi^{(0)}_2+\frac{\Gamma \left(\lam_1^2 -2 \lam_1 \lam_2 (\rho^W_{12}) +\lam_2^2\right)}{2q(1-(\rho^W_{12})^2)} \Psi^{(0)}\\
&+\frac12\left( q-1  +q\frac{\Gamma}{1-(\rho^W_{12})^2}  \left(\rho_{1}^2-2 \rho_{1} \rho_{2} \rho^W_{12}+\rho_{2}^2\right) \right) \frac{\(\beta_1\Psi^{(0)}_1+\beta_2\Psi^{(0)}_2\)^2}{ \Psi^{(0)}}=0.
\end{align}
Choosing $$q= \left(1+\Gamma \frac{\rho_1^2+ \rho_2^2 -2 \rho_1\rho_2 \rho_{12}^W}{1-(\rho_{12}^W)^2}\right)^{-1}$$ linearizes the equation to get:
\begin{align}
\d_t \Psi^{(0)}  +\frac{\Gamma}{2q(1-(\rho^W_{12})^2)} \left(\lam_1^2 -2 \lam_1 \lam_2 (\rho^W_{12}) +\lam_2^2\right)  \Psi^{(0)}+ \L \Psi^{(0)}
&=0,\label{eq:Psi0-multi}\\
\Psi^{(0)}(T, z_1, z_2)&=1,
\end{align}
where
\begin{align}
\L \phi =  \L_{\z}^{1} \phi +   \frac{\Gamma}{1-(\rho^W_{12})^2}\(\lam_1\(\rho_1 - \rho_2 \rho^W_{12}\) + \lam_2\(\rho_2 - \rho_1 \rho^W_{12}\)\) \Sumi \beta_i(z_i)\phi_i.
\end{align}
We have the Feynman--Kac representation:
\begin{align}
\Psi^{(0)}(t, z_1, z_2) = \tilde \E_{t,z_1,z_2}\[\e{\frac \Gamma{2q} \int_t^T \lam^2(Z_1(s), Z_2(s))ds } \],\label{eq:Psi0-E-multi} 
\end{align}
where $\tilde \P$ is defined so that $$\tilde B_t  = B_t - \frac{\Gamma}{1-(\rho^W_{12})^2}\int_0^t \(\lam_1(Z_1(s), Z_2(s))\(\rho_1 - \rho_2 \rho^W_{12}\) + \lam_2(Z_1(s), Z_2(s))\(\rho_2 - \rho_1 \rho^W_{12}\)\)ds$$ is standard Brownian motion under it, and we denoted
\begin{align}
\lambda^2=\frac{\lambda_1^2+\lambda_2^2-2\lambda_1\lambda_2\rho_{12}^W}{1-(\rho_{12}^W)^2}.
\end{align}

\subsubsection{Asymptotics}
In the general case, we will assume a correlation structure of the form \eqref{rhoeps}:
\begin{align}\label{rhoeps2}
\rho_{ik} = \rho_i +  \rho_{ik}^{(1)}\eps, ~i,k=1,2,\quad \mbox{ and }\quad \rho_{12}^B = 1 +   \trB_{12}\eps,
\end{align}
with $\trB_{12}<0$ and
$\eps$ is a small parameter,  $0<\eps\ll 1$, small enough to ensure a proper covariance structure. As was done previously in the case with one stock and a risk-free asset, we will now expand the general case, around the known case of $\eps=0$, and calculate the asymptotic expansion similar to \eqref{eq:ansatz-stoch-vol}. 
\begin{align}
v(t,x, z_1, z_2) = \frac{x^{p}}{p} \( \Psi^{(0)} (t,z_1, z_2)+\eps\Psi^{(1)} (t,z_1, z_2) +\O(\eps^{2})\)^q 
,
\label{eq:ansatz-stoch-vol-multi}
\end{align}
Note that the expansion has the same number of arguments as before, as there are still two factors, though the functions $\Psi^{(i)},~i=0,1,$ will be different.

Expanding the correlation coefficients as in \eqref{rhoeps2} and the value function as in \eqref{eq:ansatz-stoch-vol-multi}, 
 we see that $\Psi^{(1)}$ satisfies an equation similar to \eqref{eq:Psi1}:
\begin{align}
&\d_t\Psi^{(1)} + \L  \Psi^{(1)} +\frac{\Gamma}{2q(1-(\rho^W_{12})^2)} \left(\lam_1^2 -2 \lam_1 \lam_2 (\rho^W_{12}) +\lam_2^2\right) \Psi^{(1)}  +  f_1(\Psi^{(0)}, \nabla \Psi^{(0)}, \mathbb H(\Psi^{(0)}) )=0,
\label{eq:Psi1-multi}
\end{align}
where $\nabla \Psi^{(0)}, \mathbb H(\Psi^{(0)})$ denote the gradient and the Hessian of $\Psi^{(0)}$ and
\begin{align}
&f_1(\Psi^{(0)}, \nabla \Psi^{(0)}, \mathbb H(\Psi^{(0)}) )=\frac{q\Gamma }{(1-(\rho^W_{12})^2) \Psi^{(0)}}  \left(\Sumi \beta_i^2 \(  (\rho_1 - \rho_2\rho^W_{12}) \tr_{1i}+(\rho_2 - \rho_1\rho^W_{12}) \tr_{2i}\) (\Psi^{(0)}_i)^2\right.\\
&\left.+\beta_1 \beta_2 \Psi^{(0)}_1 \Psi^{(0)}_2
\(   (\rho_1 - \rho_2\rho^W_{12}) (\tr_{11} + \tr_{12})+(\rho_2 - \rho_1\rho^W_{12}) (\tr_{21} + \tr_{22})-(\rho_1^2 + \rho_2^2 - 2\rho_1\rho_2\rho^W_{12})\trB_{12}\)
\right)\\
& \frac{\Gamma}{1- (\rho^W_{12})^2} \Sumi \((\lam_1-\lam_2\rho^W_{12}) \tr_{1i}  +(\lam_2-\lam_1\rho^W_{12}) \tr_{2i} \) \beta_i  \Psi^{(0)}_i - \trB_{12}\beta_1\beta_2\Psi^{(0)}_{12}.
\end{align}

We now consider $\pi^0$, the first order approximation  to $\pi^{*}$ given in \eqref{eq:pi-multi}, by substituting the first order approximation for $v$ from \eqref{eq:ansatz-stoch-vol}, namely, $v(t,x, z_1, z_2) \approx \frac{x^{p}}{p} \(\Psi^{(0)}(t,z_1, z_2)\)^q.$

Therefore,
\begin{align}
\pi^{0}_i &= \frac{x \left(q   \left(\beta_1 \left(\rho_{i1}-\rho_{j1} \rho^{W}_{12}\right) \Psi_1^{(0)}+\beta_2 \left(\rho_{i2}-\rho_{j2} \rho^{W}_{12}\right) \Psi_2^{(0)}\right)+\Psi^{(0)}  (\lam_i -\lam_j \rho^{W}_{12} )\right)}{(1-p) \left(1-(\rho^{W}_{12})^2\right) \sig_i \Psi^{(0)} }~ i,j=1, 2, ~i\ne j.
\label{eq:pi0-multi}
\end{align}
We next use $(\pi_1, \pi_2) = (\pi^{0}_1,\pi^{0}_2)$ in the supremum of  \eqref{eq:HJB-multi} together with the expansions \eqref{rhoeps2}, \eqref{eq:ansatz-stoch-vol-multi} and evaluate the equation, to get that: 
\begin{align}
&\frac{\d_t v + \L_{\z}^{\rho^B_{12}} v + \L^{\pi^{0},\rho^W,\rho}_{x,\mathbf{z}} v} {v} \\
&= q\( \Psi^{(0)} + \eps  \Psi^{(1)}\) \Bigg[
{\(\d_t \Psi^{(0)}  +\frac{\Gamma}{2q(1-(\rho^W_{12})^2)} \left(\lam_1^2 -2 \lam_1 \lam_2 (\rho^W_{12}) +\lam_2^2\right) \Psi^{(0)}+ \L \Psi^{(0)}\)} \\
&  +
{\(\d_t\Psi^{(1)} + \L  \Psi^{(1)} + \frac{\Gamma}{2q(1-(\rho^W_{12})^2)} \left(\lam_1^2 -2 \lam_1 \lam_2 (\rho^W_{12}) +\lam_2^2\right) \Psi^{(1)}  -  f_1(\Psi^{(0)}, \nabla \Psi^{(0)}, \mathbb H(\Psi^{(0)})\)}\Bigg]\eps+ \O(\eps^2)\\
&= \O(\eps^2),
\end{align}
where the last equality is obtained by cancelling the first two terms  using the equations \eqref{eq:Psi0-multi} and \eqref{eq:Psi1-multi} satisfied by $\Psi^{(0)}$ and $\Psi^{(1)}$ respectively.

To summarize, this formal computation shows that the strategy $(\pi_1, \pi_2) = (\pi^{0}_1,\pi^{0}_2)$ given by \eqref{eq:pi0-multi} generates the value $v$ given by \eqref{eq:HJB-multi} up to order $\eps$.

\subsubsection{Explicit Formulas}

We again consider a specific choice of a model, similar to the example in Section \ref{sec:explicit1}.
Namely, we change \eqref{eq:ex1.1} and \eqref{eq:ex1.2} to account for two stocks to be:
\begin{align}
&\mu_i(z_1, z_2) = \bar\mu_i, ~~ \sigma_i(z_1, z_2) =  \frac{\bar\sigma_i}{\sqrt{\eta_1\abs{z_1-z_2}+\eta_2\abs{z_1+z_2}+1}},~~\bar \lam_i = \frac{\bar \mu_i}{\bar \sig_i},~~\beta_i(z_1, z_2)  = 
\sqrt{\abs{z_1+z_2}},\label{eq:ex1.1-multi}\\
&\alpha_i(z_1, z_2) = (m_1+m_2) + m_i \sign{z_1-z_2}  + z_i -\frac{\Gamma\beta_i \(\rho_{1i} (\lam_1 -\lam_2 \rho^{W}_{12} )+\rho_{2i} (\lam_2 -\lam_1 \rho^{W}_{12} )\)}{1-(\rho^W_{12})^2},~i=1,2,
\label{eq:ex1.2-multi}
\end{align}
%
with $n=m=2$ in \eqref{eq:S-multiD} and \eqref{eq:Z-multiD}. 
Similar to Section \ref{sec:explicit1}, we have that outside of a set with small probability $Z_1 + Z_2$ is bounded as $(Z_1+Z_2)(s) \le \frac1{\sqrt{\eps}}$, and $Z_1(s)-Z_2(s)\ge0.$ Therefore similar to \eqref{eq:HJB-mod}, let $\tilde \Psi$
be a solution to the approximating PDE:
\begin{align}
&\d_t\tilde \Psi+\frac{z_1+z_2}{2}\(\tilde\Psi_{11} + 2\rho^B_{12} \tilde\Psi_{12} + \tilde\Psi_{22}\) +  \(\eta_1(z_1-z_2)+\eta_2(z_1+z_2)+1\) \frac{\Gamma \left(\bar\lam_1^2 -2 \bar\lam_1\bar \lam_2 \rho^W_{12} +\bar\lam_2^2\right)}{2q(1-(\rho^W_{12})^2)}\tilde \Psi\\
&+\Sumi \(m_1+m_2+m_i+z_i  \)\tilde\Psi_i\\
&+\frac{z_1+z2}2\sum_{i,j=1,2,~ j\ne i}  \beta_i^2 \left( q-1  +q\frac{\Gamma}{1-(\rho^W_{12})^2}  \left(\rho _{ii}^2-2 \rho _{ji} \rho _{ii} \rho^W_{12}+\rho _{ji}^2\right)\right) \frac{(\tilde\Psi_i)^2}{\tilde\Psi}\\
&+ (z_1+z_2)\( q \frac{\Gamma}{1-(\rho^W_{12})^2} \(\rho_{22}\rho_{21}+\rho_{11}\rho_{12} - \rho_{12}^W(\rho_{12}\rho_{21}+\rho_{22}\rho_{11}) \) +(q-1)\rho^B_{12}\)\frac{\tilde\Psi_1\tilde\Psi_2}{\tilde\Psi}=0, \mbox{ for } 0\le t<T,\\ 
&0<z_1+z_2<\eps^{-1/4},~ 0<z_1-z_2,\\
&\tilde \Psi (T, z_1, z_2) = 1,~\tilde \Psi (t, z_1, z_2) =0, \mbox{ on } z_1-z_2 = 0,~z_1 +z_2 = \eps^{-1/4}.\label{eq:bnd-mod-multi}
\end{align}
Then as before \eqref{eq:Psi-tildePsi} holds. Therefore we solve for our approximations $\Psi^{(0)}$, $\Psi^{(1)}$, similarly to Section \ref{sec:explicit1}.
Namely, $\Psi^{(0)}$ is the solution to PDE:
\begin{align}
&\d_t \Psi^{(0)}  + \sum_{i=1}^2 (m_1+m_2 + m_i  +z_i)  \Psi^{(0)}_i + \frac{z_1+z_2}{2} \( \Psi^{(0)}_{11} + 2 \Psi^{(0)}_{12} + \Psi^{(0)}_{22} \)  \label{eq:Psi0-ex-multi}\\
&\quad+\frac{\Gamma \left(\bar\lam_1^2 -2 \bar\lam_1\bar \lam_2 \rho^W_{12} +\bar\lam_2^2\right)}{2q(1-(\rho^W_{12})^2)}(1+(\eta_1+\eta_2)z_1-(\eta_1-\eta_2) z_2) \Psi^{(0)}
=0,\\
&\Psi^{(0)}(T, z_1, z_2)=1,
\end{align}
which is the same as \eqref{eq:Psi0-ex}, only with $\bar\lam^2 =\frac{ \left(\bar\lam_1^2 -2 \bar\lam_1\bar \lam_2 \rho^W_{12} +\bar\lam_2^2\right)} {(1-(\rho^W_{12})^2)}$,
and therefore $\Psi^{(0)} = \e{A(t) + \bar B_1(t) z_1 + \bar B_2(t)z_2} $ can be solved the same way as in Section \ref{sec:explicit1}. 
Moreover, analogously to \eqref{eq:pi0-ex} from \eqref{eq:pi0-multi} we have that 
\begin{align}
\pi^{0}_i &= \frac{x \left(q   \sum_{k=1}^2 \left(\beta_k(z_1,z_2) \left(\rho_{ik}-\rho_{jk} \rho^{W}_{12}\right) \bar B_k(t)\right)+ (\lam_i(z_1,z_2) -\lam_j(z_1,z_2) \rho^{W}_{12} )\right)}{(1-p) \left(1-(\rho^{W}_{12})^2\right) \sig_i(z_1,z_2)  }~ i,j=1, 2, ~i\ne j.
\end{align}
The same is true regarding $\Psi^{(1)}$ in case $\eta_2=0$, then it is the solution of the PDE
\begin{align}
&\d_t \Psi^{(1)}  + \sum_{i=1}^2 (m_1+m_2 + m_i  +z_i)  \Psi^{(1)}_i + \frac{z_1+z_2}{2} \( \Psi^{(1)}_{11} + 2 \Psi^{(1)}_{12} + \Psi^{(1)}_{22} \)  \\
&\quad+\frac{\Gamma \left(\bar\lam_1^2 -2 \bar\lam_1\bar \lam_2 \rho^W_{12} +\bar\lam_2^2\right)}{2q(1-(\rho^W_{12})^2)}(1+\eta_1(z_1- z_2)) \Psi^{(1)}=\hat f_1(t,t,z_1- z_2, z_1+z_2).\label{eq:Psi1-ex-multi}\\
&\Psi^{(1)}(T, z_1, z_2)=0,
\end{align}
with
\begin{align}
&\hat f_1 (t,s,x,y) =
\frac{yq\Gamma \Psi^{(0)}}{(1-(\rho^W_{12})^2) }  \left(\Sumi \bar B_i^2(s) \(  (\rho_1 - \rho_2\rho^W_{12}) \tr_{1i}+(\rho_2 - \rho_1\rho^W_{12}) \tr_{2i}\) \right.\\
&\left.+\bar B_1(s) \bar B_2(s) 
\(   (\rho_1 - \rho_2\rho^W_{12}) (\tr_{11} + \tr_{12})+(\rho_2 - \rho_1\rho^W_{12}) (\tr_{21} + \tr_{22})-(\rho_1^2 + \rho_2^2 - 2\rho_1\rho_2\rho^W_{12})\trB_{12}\)
\right)\\
& \frac{\Gamma\sqrt{y\( 1+ \eta_1 \(m_2-m_1 + \e{ s-t} ( x + m_1-m_2)\) \) }}{1- (\rho^W_{12})^2} \Sumi \((\bar \lam_1-\bar\lam_2\rho^W_{12}) \tr_{1i}  +(\bar\lam_2-\bar\lam_1\rho^W_{12}) \tr_{2i} \) \bar B_i(s)  \Psi^{(0)} \\
&- \trB_{12}\bar B_1(s) \bar B_2(s) \Psi^{(0)}.
\end{align}
Again, it solution is given by \eqref{eq:Psi1-ex-FK} with $\bar\lam^2 =\frac{ \left(\bar\lam_1^2 -2 \bar\lam_1\bar \lam_2 \rho^W_{12} +\bar\lam_2^2\right)} {(1-(\rho^W_{12})^2)}$.

\section{Conclusion}
The problem of portfolio optimization with power utilities when returns and volatilities are driven by a single factor can be linearized by using a classical distortion transformation. In this paper we proposed to treat this same problem in the presence of several factors. Our approach is to consider a perturbation around the case where the factors are fully correlated which can be linearized and amenable to simpler equations. We identify the leading order term for the value function corresponding to a Merton's portfolio and we characterize the first order correction as the solution to a linear equation. An example with explicit solutions is given to illustrate the quality of the approximation. Under a set of reasonable assumptions, we rigorously establish an accuracy result for this regular perturbation problem by using the construction of sub- and super-solutions to the fully nonlinear HJB equation characterizing the value function. In turn, we deduce that the leading order approximation of the optimal strategy generates the value function up to the first order of accuracy.

\subsection*{Acknowledgement}
The authors would like to thank Ruimeng Hu for her comments on an earlier version of the paper. The authors are also grateful to the two referees whose comments and suggestions helped a lot in improving the paper.

\small{\bibliography{references2}}
\end{document}